\DeclarePairedDelimiter\floor{\lfloor}{\rfloor}
\newtheorem{thm}{Theorem}[section]
\newtheorem{lem}[thm]{Lemma}
\newtheorem{prop}[thm]{Proposition}
\def\BibTeX{{\rm B\kern-.05em{\sc i\kern-.025em b}\kern-.08em
		T\kern-.1667em\lower.7ex\hbox{E}\kern-.125emX}}
\def\0{\mathbf{0}}
\def\E{\mathbb{E}}
\def\P{\mathbb{P}}
\def\R{\mathbb{R}}
\def\Z{\mathbb{Z}}
\def\cQ{\mathcal{Q}}
\def\cR{\mathcal{R}}
\def\G{\Gamma}
\begin{document}
	\title{Probabilistic Forwarding of Coded Packets on Networks} 
	
	\author{%
		\IEEEauthorblockN{B.~R.~Vinay Kumar and Navin Kashyap}
		\IEEEauthorblockA{Department of Electrical Communication Engineering\\
			Indian Institute of Science, Bengaluru, India\\
			Email: \{vinaykb, nkashyap\}@iisc.ac.in}
	}
		
	\maketitle

	\begin{abstract}
		We consider a scenario of broadcasting information over a network of nodes connected by noiseless communication links. A source node in the network has $k$ data packets to broadcast, and it suffices that a large fraction of the network nodes receives the broadcast. The source encodes the $k$ data packets into $n \ge k$ coded packets using a maximum distance separable (MDS) code, and transmits them to its one-hop neighbours. Every other node in the network follows a probabilistic forwarding protocol, in which it forwards a previously unreceived packet to all its neighbours with a certain probability $p$. A ``near-broadcast'' is when the expected fraction of nodes that receive at least $k$ of the $n$ coded packets is close to $1$. The forwarding probability $p$ is chosen so as to minimize the expected total number of transmissions needed for a near-broadcast. In this paper, we analyze the probabilistic forwarding of coded packets on two specific network topologies: binary trees and square grids. For trees, our analysis shows that for fixed $k$, the expected total number of transmissions increases with $n$. On the other hand, on grids, we use ideas from percolation theory to show that a judicious choice of $n$ will significantly reduce the expected total number of transmissions needed for a near-broadcast. 
	\end{abstract}
	
	
	\section{Introduction}\label{intro}
	The Internet of Things (IoT) involves different types of physical devices --- sensors, actuators, routers, mobiles etc. ---  communicating with each other over a network. Each node in the network has minimal computational ability and limited knowledge of the network topology. Broadcast mechanisms are often required in such ad-hoc networks to disburse key network-related information, for example, to carry out over-the-air programming of the IoT nodes. Further, these mechanisms need to be completely distributed and must impose minimal computational burden on the nodes. Broadcast mechanisms such as flooding, although being distributed and reliable, are not efficient, since there are excessive transmissions and consequently a high energy expenditure \cite{tseng2002broadcast}. To overcome this, probabilistic forwarding of received packets may be employed (see \cite{sasson2003probabilistic},\cite{haas2006gossip}), wherein each node either forwards a previously unreceived packet to all its neighbors with probability $p$ or takes no action with probability $1-p$.
	
	 In a previous paper \cite{ncc2018:probfwding}, we studied the effect of introducing redundancy in the form of coded packets into this probabilistic forwarding protocol. We describe the setup here. Consider a large network with a particular node designated as the source. The source has $k$ message packets to send to a large fraction of nodes in the network. The $k$ message packets are first encoded into $n \ge k$ coded packets using a maximum distance separable (MDS) code (see e.g., \cite[Ch.~11]{roth2006}).  The MDS code ensures that any node that receives at least $k$ of the $n$ coded packets can retrieve the original $k$ message packets by treating the unreceived packets as erasures. The $n$ coded packets are indexed by the integers from $1$ to $n$, and the source transmits each packet to all its one-hop neighbours. All the other nodes in the network use the probabilistic forwarding mechanism: when a packet (say, packet $ \# j$) is received by a node for the first time, it either transmits it to all its one-hop neighbours with probability $p$ or does nothing with probability $1-p.$ The node ignores all subsequent receptions of packet $\# j$.
	
	Our goal is to analyze the performance of the above algorithm. In particular, we wish to find the minimum retransmission probability $p$ for which the expected fraction of nodes receiving at least $k$ out of the $n$ coded packets is close to 1, which we deem a ``near-broadcast''. 
This probability yields the minimum value for the expected total number of transmissions across all the network nodes needed for a near-broadcast. The expected total number of transmissions is a measure of the energy expenditure in the network. 
	
	Simulation results presented in \cite{ncc2018:probfwding} indicate that over a wide range of network topologies (including the important case of random geometric graphs, but not including tree-like topologies),  the expected total number of transmissions initially decreases to a minimum and then increases with $n$. Our aim is to understand this behaviour and predict, via analysis, the value of $n$ that minimizes the expected number of transmissions. While we would ultimately like to explain this behaviour on random geometric graphs, which constitute an important model for wireless ad-hoc networks \cite{vaze2015random}, we have not yet developed the tools required for the analysis there. In this paper, we present an analysis for trees and grids.
	
	The rest of the paper is organized as follows. Section-\ref{sec:formulation} provides a theoretical formulation of the problem. In Sections~\ref{sec:tree} and~\ref{sec:grid}, we consider the problem on rooted binary trees and grids, respectively, and provide bounds and estimates for the expected number of transmissions. 
The appendix contains proofs of our results.
	
	\section{Problem Formulation}\label{sec:formulation}
		The problem formulation here is essentially reproduced from \cite{ncc2018:probfwding}. 
	Consider a graph $G=(V,E)$, where $V$ is the vertex set with $N$ vertices (nodes) and $E$ is the set of edges (noiseless communication links). A source node $s \in V$ has $k$ message packets which need to be broadcast in the network. The source $s$ encodes the $k$ messages into $n$ coded packets using an MDS code (see e.g., \cite[Ch.~11]{roth2006}). Thus, on receiving any $k$ of these $n$ coded packets, a node can retrieve the original $k$ message packets. It is assumed that each packet has a header which identifies the packet index $j \in [n] := \{1,2,\ldots,n\}$.  It is also assumed that all the required encoding/decoding operations are carried out over a sufficiently large field, so that an MDS code with the necessary parameters exists. 

 The source node broadcasts all $n$ coded packets to its one-hop neighbours, after which the probabilistic forwarding protocol takes over. A node receiving a particular packet for the first time, forwards it to all its one-hop neighbours with probability $p$ and takes no action with probability $1-p$. Each packet is forwarded independently of other packets and other nodes. This probabilistic forwarding continues until there are no further transmissions in the system. The protocol indeed must terminate after finitely many transmissions since each node in the network may choose to forward a particular coded packet only the first time it is received. The node ignores all subsequent receptions of the same packet, irrespective of the decision it took at the time of first reception.

We are interested in the following scenario. Let $R_{k,n}$ be the number of nodes, including the source node, that receive at least $k$ out of the $n$ coded packets. Given a $\delta \in (0,1)$, let $p_{k,n,\delta}$ be the minimum forwarding probability $p$ for a near-broadcast, i.e.,
	\begin{equation}
	p_{k,n,\delta} \ := \ \inf\left\{p\  \bigg|\ \mathbb{E}\left[\frac{R_{k,n}}{N}\right] \geq 1-\delta \right\}.
	\label{def:pkndelta}
	\end{equation} 

The performance measure of interest, denoted by $\tau_{k,n,\delta}$, is the expected total number of transmissions across all nodes when the forwarding probability is set to $p_{k,n,\delta}$. Here, it should be clarified that whenever a node forwards (broadcasts) a packet to all its one-hop neighbours, it is counted as a single (simulcast) transmission. Our aim is to determine, for a given $k$ and $\delta$, how $\tau_{k,n,\delta}$ varies with $n$, and the value of $n$ at which it is minimized. To this end, it is necessary to first understand the behaviour of $p_{k,n,\delta}$ as a function of $n$. In this direction, we have the following simple lemma, valid for any connected graph $G = (V,E)$, proved in Section~\ref{sec:lem1proof} of the Appendix.
	
	\begin{lem}\label{lem:pkndelzero}
		For fixed values of $k$ and $\delta$, 
		\begin{enumerate}
			\item[\emph{(a)}] $p_{k,n,\delta}$ is a non-increasing function of n.
			\item[\emph{(b)}] $p_{k,n,\delta} \rightarrow 0$ as $n\rightarrow \infty.$
		\end{enumerate}
	\end{lem}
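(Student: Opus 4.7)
The plan is to prove both parts by leveraging the fact that the forwarding decisions for different packets are independent. I would introduce, for each node $u \ne s$ and each packet index $j$, an independent Bernoulli$(p)$ random variable $B_u^j$ encoding whether $u$ would forward packet $j$ upon first reception; this family carries all the randomness in the protocol. The key observation is that for any fixed node $v$, the event $\{v \text{ receives packet } j\}$ is determined solely by $\{B_u^j : u \ne s\}$ and is therefore i.i.d.\ across $j$.

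For part (a), I would couple the $n_1$- and $n_2$-packet protocols (with $n_1 < n_2$) on the same probability space by sharing the variables $\{B_u^j : j \le n_1\}$. Under this coupling, each node receives exactly the same packets among the first $n_1$ in both protocols, so pathwise $R_{k,n_1} \le R_{k,n_2}$. Hence $\E[R_{k,n}/N]$ is non-decreasing in $n$ for every fixed $p$, which means any $p$ admissible in~\eqref{def:pkndelta} for $n_1$ is also admissible for $n_2$; the infimum $p_{k,n,\delta}$ is therefore non-increasing in $n$.

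For part (b), fix any $p \in (0, 1]$; I aim to show $\E[R_{k,n}/N] \to 1$ as $n \to \infty$, which together with part (a) yields $p_{k,n,\delta} \to 0$. By the per-packet independence noted above, the number of distinct packets received by a fixed node $v$ is $\mathrm{Binomial}(n, q_v)$, where $q_v := \P(v \text{ receives a given packet})$. Since $G$ is connected, there is a shortest path $s = v_0, v_1, \ldots, v_{d_v} = v$, and the event $\{B_{v_i}^j = 1 \text{ for all } 1 \le i \le d_v - 1\}$, of probability $p^{d_v - 1} > 0$, forces $v$ to receive packet $j$. Hence $q_v \ge p^{d_v - 1} > 0$, so $\P(\mathrm{Binomial}(n, q_v) \ge k) \to 1$ as $n \to \infty$, and summing over the finite vertex set gives $\E[R_{k,n}/N] \to 1$. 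I expect no serious obstacle beyond formalizing this per-packet independence, which underlies both parts.
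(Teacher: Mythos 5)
Your part (a) is essentially the paper's argument: both couple the protocols by realizing the $n$-packet process and counting receptions restricted to a prefix of the packets, giving pathwise monotonicity of $R_{k,n}$ and hence monotonicity of the infimum in \eqref{def:pkndelta}. Your part (b) is correct but takes a genuinely different route. The paper partitions the $n$ packets into $\lfloor n/k\rfloor$ disjoint groups of $k$ and uses independence \emph{across groups}: the event $A_i$ that group $i$ alone reaches a $(1-\delta/2)$ fraction of nodes has fixed positive probability for any $p>0$, so some $A_i$ occurs with probability tending to $1$, and a short calculation converts this into $\E[R_{k,n}/N]\ge 1-\delta$. You instead argue \emph{per node}: since packet receptions at a fixed node $v$ are i.i.d.\ across packet indices (the forwarding variables are independent across packets), the number of packets $v$ receives is $\mathrm{Binomial}(n,q_v)$, and connectivity gives $q_v\ge p^{d_v-1}>0$ via the shortest-path event you describe, so $\P(v \text{ gets at least } k)\to 1$ for each $v$ and linearity of expectation finishes. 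Both are sound; your argument is arguably more direct and quantitative (it gives an explicit lower bound $q_v \ge p^{d_v-1}$ and, via a Chernoff bound, a rate at which $n$ must grow in terms of $p$, $k$, and the graph diameter), while the paper's grouping trick avoids introducing the per-packet coupling machinery and works with only the crude fact that a full broadcast of $k$ packets has positive probability. One small remark: you do not actually need part (a) to conclude (b) — showing that for every fixed $p>0$ one has $p_{k,n,\delta}\le p$ for all sufficiently large $n$ already gives the limit, which is exactly how the paper phrases its conclusion as well.
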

	
On the other hand, $\tau_{k,n,\delta}$ typically exhibits more complex behaviour. As demonstrated via simulations in \cite{ncc2018:probfwding}, over a wide range of graph topologies (both deterministic and random), except notably for trees (see Section~\ref{sec:tree}), $\tau_{k,n,\delta}$ initially decreases and then grows gradually as $n$ increases. This trend can be seen most clearly in a grid topology --- see Section~\ref{sec:grid}. Thus, there typically is an optimal value of $n$ that minimizes $\tau_{k,n,\delta}$. This happens due to an interplay between two opposing factors: as $n$ increases, $p_{k,n,\delta}$ decreases (Lemma~\ref{lem:pkndelzero}), which contributes towards a decrease in $\tau_{k,n,\delta}$. But this is opposed by the fact that the overall number of transmissions tends to increase when there are a larger number of packets traversing the network. To determine the value of $n$ that minimizes $\tau_{k,n,\delta}$, we need more precise estimates of $p_{k,n,\delta}$, and consequently, $\tau_{k,n,\delta}$. For specific graph topologies, we may be able to obtain such estimates using methods tailored to those topologies. We demonstrate this for two topologies in the next two sections, starting with the easiest case of a binary tree.
	
	\section{Rooted Binary Trees}\label{sec:tree}
	\begin{figure}
		\centerline{\scalebox{0.38}{\input{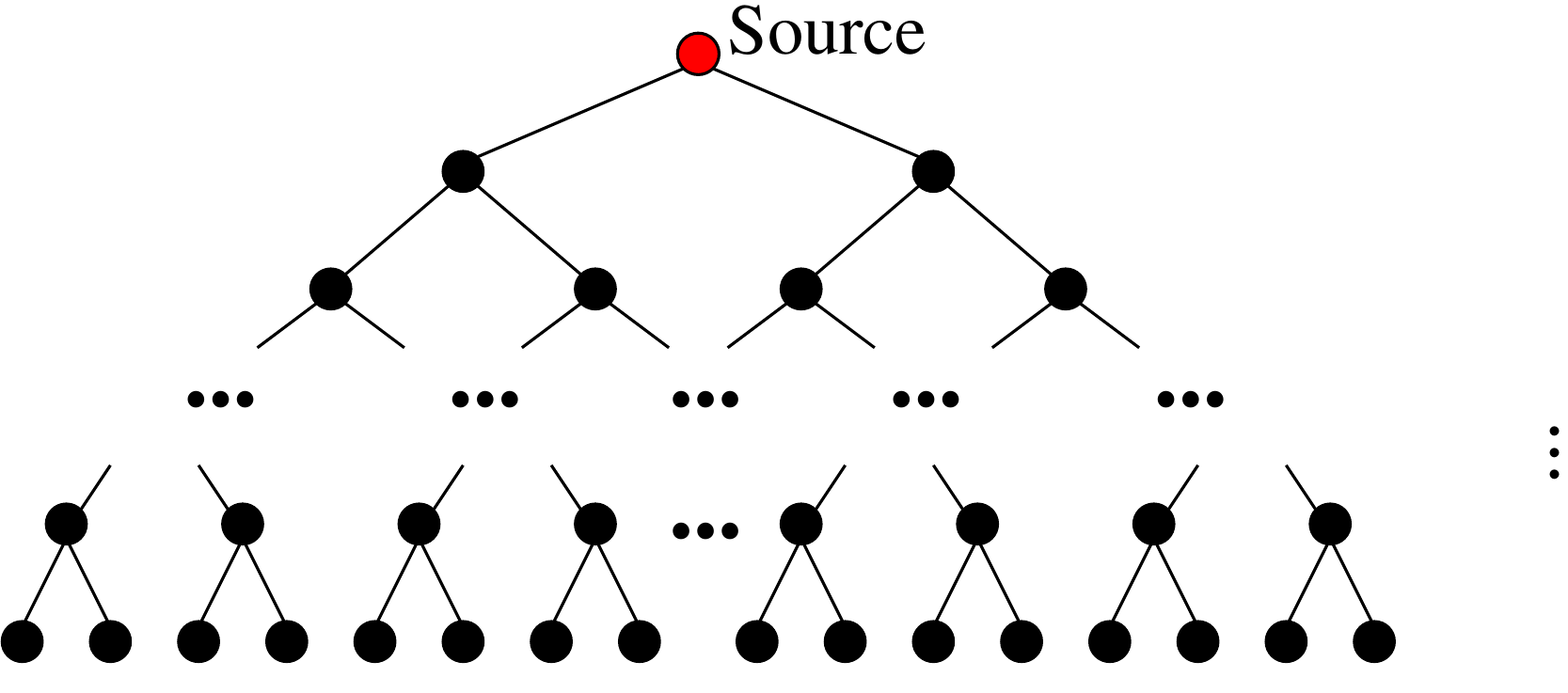_t}}} 		
		\caption{A rooted binary tree of height $H$.}
		\label{fig:bintree}
		\vspace{-1.5em}
	\end{figure}
	Consider a rooted binary tree of height $H \ge 2$ as depicted in Fig.~\ref{fig:bintree}. The root of the tree is the source node and is at level $l=0$.
	The root node encodes the $k$ data packets into $n$ coded packets and transmits them to its children. Every other node on the tree follows the probabilistic forwarding strategy with some fixed forwarding probability $p > 0$. We will assume that the nodes at level $H$ (i.e., the leaf nodes) do not transmit, as there is nothing to be gained in allowing them to do so. 
	
	On a rooted binary tree of height $H \ge 2$, when the number of data packets, $k$, is fixed, and $\delta > 0$ is sufficiently small, simulation results presented in \cite{ncc2018:probfwding} show that $\tau_{k,n,\delta}$ increases with $n$. A large-deviations analysis aiming to explain these results was also attempted in \cite{ncc2018:probfwding}. However, this analysis is only valid in the regime where $k$ and $n$ are both large but the ratio $k/n$ is fixed. Thus, the analysis in  \cite{ncc2018:probfwding} does not in fact explain the simulation results. 
	
	In this section, we present results for the regime when $k$ is fixed and $n$ is allowed to vary. We also fix a $H \ge 2$ and a $\delta \in (0,\frac18)$. In the interest of brevity, and in order to quickly move on to the more interesting and challenging analysis for grids in the next section, we only provide the statements of the results here. Detailed derivations of these results can be found in Section~\ref{sec:bintree_proofs} of the Appendix. 
	
	It was shown in \cite{ncc2018:probfwding} that $p_{k,n,\delta}$ is the least value of $p \in [0,1]$ for which\footnote{This is a re-arrangement of Eq.~(4) in \cite{ncc2018:probfwding}.} 
	\begin{equation}\label{exprtomin}
	\frac{\sum_{l=0}^{H-1}2^{l+1}\mathbb{P}(Z_l \leq k-1)}{2^{H+1}-1} \leq \delta,
	\end{equation}
	where $Z_l \sim \text{Bin}(n,p^l)$ for $l = 0,1,\ldots,H-1$, and 
	 \begin{equation}
	\tau_{k,n,\delta} \ = \  n \, \left[\frac{(2p_{k,n,\delta})^H-1}{2p_{k,n,\delta}-1}\right].
	\label{eq:tau_tree}
	\end{equation}

An analysis starting from \eqref{exprtomin} yields the two propositions below, which provide good lower and upper bounds on $p_{k,n,\delta}$. These bounds are plotted, for $k=100$, $\delta = 0.1$ and $H = 50$, in Fig.~\ref{fig:tree_plots}(a) along with the exact values of $p_{k,n,\delta}$ obtained numerically from \eqref{exprtomin}. The corresponding plots for $\tau_{k,n,\delta}$, obtained via \eqref{eq:tau_tree}, are shown in Fig.~\ref{fig:tree_plots}(b). 
	
	\begin{prop}
		Let $k \ge 2$, $H \ge 2$, and $0 \le \delta < \frac18$ be fixed. For all $n \ge k$, we have 
		$
		p_{k,n,\delta} > {\left(\frac{k-1}{n}\right)}^{\frac{1}{H-1}}.
		$
		
		In the case of $k=1$ and $n > 1$, the lower bound can be improved to
		$
		p_{k,n,\delta} > {\left(\frac{1}{n}\right)}^{\frac{1}{H-1}}.
		$
		\label{prop:p_lobnd}
	\end{prop}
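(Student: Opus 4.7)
The plan is to prove the contrapositive: at $p_0 := \left(\frac{k-1}{n}\right)^{1/(H-1)}$ (respectively $p_0 := n^{-1/(H-1)}$ for $k=1$), the left-hand side of \eqref{exprtomin} strictly exceeds $\delta$. Each term $\mathbb{P}(Z_l \le k-1)$ is non-increasing in $p$ (larger $p$ stochastically increases $Z_l = \text{Bin}(n,p^l)$), so the LHS of \eqref{exprtomin} is non-increasing and continuous in $p$. Consequently $p_{k,n,\delta}$, the infimum of $p$ satisfying \eqref{exprtomin}, must lie strictly above any $p_0$ at which \eqref{exprtomin} is violated.

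To bound the LHS at $p_0$, I would retain only the $l = H-1$ term, giving
\[
\text{LHS of \eqref{exprtomin}} \;\ge\; \frac{2^{H}\,\mathbb{P}(Z_{H-1} \le k-1)}{2^{H+1}-1}.
\]
For $k \ge 2$, the choice of $p_0$ makes $p_0^{H-1} = (k-1)/n$, so $Z_{H-1} \sim \text{Bin}\bigl(n,(k-1)/n\bigr)$ has integer mean $k-1$. A classical result on binomial medians (Kaas and Buhrmann, 1980) asserts that when $nq$ is an integer, the median of $\text{Bin}(n,q)$ equals $nq$; hence $\mathbb{P}(Z_{H-1} \le k-1) \ge \tfrac12$. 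Substituting and using $2^{H+1} > 2^{H+1}-1$ yields
\[
\text{LHS of \eqref{exprtomin}} \;\ge\; \frac{2^{H-1}}{2^{H+1}-1} \;>\; \frac{1}{4} \;>\; \delta,
\]
since $\delta < 1/8$. For the sharper bound at $k=1$, the same $l = H-1$ truncation gives $\mathbb{P}(Z_{H-1} \le 0) = (1 - 1/n)^n$. The elementary fact that $x \mapsto (1 - 1/x)^x$ is increasing on $[2,\infty)$ with value $1/4$ at $x=2$ yields $(1-1/n)^n \ge 1/4$ for every $n \ge 2$, whence
\[
\text{LHS of \eqref{exprtomin}} \;\ge\; \frac{2^{H-2}}{2^{H+1}-1} \;>\; \frac{1}{8} \;>\; \delta,
\]
again by $\delta < 1/8$.

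The main obstacle is invoking the binomial-median fact cleanly; if a citation is unavailable one can argue directly that for $\text{Bin}(n,q)$ with integer mean $m = nq$ the probability mass is unimodal with mode at $m$ and $\mathbb{P}(X=m-j) \ge \mathbb{P}(X=m+j)$ for all $j \ge 1$, which by pairing terms gives $\mathbb{P}(X \le m) \ge \mathbb{P}(X \ge m)$, hence $\ge 1/2$. A minor care point is that the hypothesis $\delta < 1/8$ is tight for the method: the $k=1$ bound above converges to exactly $1/8$ as $H \to \infty$, so no sharper universal $\delta$-threshold is extractable from this style of argument. Note that the $H \ge 2$ assumption is needed only to keep $p_0 \le 1$ and the exponent $1/(H-1)$ well-defined.
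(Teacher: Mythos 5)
Your proof is correct and takes essentially the same route as the paper's: drop all but the $l=H-1$ term in \eqref{exprtomin}, lower-bound $\mathbb{P}(Z_{H-1}\le k-1)$ by $\tfrac12$ via the integer-mean binomial median fact for $k\ge 2$, use $(1-1/n)^n \ge \tfrac14$ for $k=1$, and compare against $\delta<\tfrac18$; the paper argues for all $p$ with $np^{H-1}\le k-1$ at once (citing Jogdeo--Samuels for ``mean $\le k-1$ implies median $\le k-1$''), which makes your separate monotonicity-and-continuity-in-$p$ step unnecessary, though your version of that step is also fine. One caution: your citation-free fallback claim, that $\mathrm{Bin}(n,q)$ with integer mean $m$ satisfies $\mathbb{P}(X=m-j)\ge\mathbb{P}(X=m+j)$ for all $j\ge 1$, is false when $q>\tfrac12$ (e.g., for $\mathrm{Bin}(4,3/4)$ with $m=3$, $\mathbb{P}(X=2)=54/256 < \mathbb{P}(X=4)=81/256$), and here $q=(k-1)/n$ can exceed $\tfrac12$ since only $n\ge k$ is assumed; so you should rely on the cited median result (Kaas--Buhrman, or the paper's Jogdeo--Samuels reference) rather than that pairing argument.
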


	\begin{prop}
		Let $k \ge 2$, $H \ge 2$, and $0 < \delta \le 1$ be fixed, and let $\delta' :=  \min\left\{\delta  \left(\frac{2^{H+1}-1}{2^{H+1}-2}\right), 1 \right\}$. Then, for all $n \ge 1$, we have
		$$
		p_{k,n,\delta} \le \min\left\{{\left(\frac{k-1+t}{n}\right)}^{\frac{1}{H-1}},1\right\}, 
		$$
		where $t = \sqrt{2(k-1)(-\ln\delta') + (\ln\delta')^2} - \ln\delta'$.
		In the case of $k=1$, the bound
		$$
		p_{k,n,\delta} \le \min\left\{{\left(\frac{-\ln\delta'}{n}\right)}^{\frac{1}{H-1}}, 1\right\}
		$$
		holds for all $n \ge 1$.
		\label{prop:p_upbnd}
	\end{prop}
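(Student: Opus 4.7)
The plan is to exhibit a single forwarding probability $p$ that satisfies the near-broadcast condition \eqref{exprtomin}; this upper-bounds $p_{k,n,\delta}$ by definition. I would take $p = \bigl(\tfrac{k-1+t}{n}\bigr)^{1/(H-1)}$, which we may assume lies in $[0,1]$ (otherwise the outer minimum in the statement collapses to the trivial bound $p_{k,n,\delta} \le 1$). This choice is designed so that $np^{H-1} = k-1+t$, and more generally $np^l \ge k-1+t$ for every $l \in \{1,\ldots,H-1\}$ because $p \le 1$.

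Two ingredients then drive the analysis. First, $Z_0 \sim \mathrm{Bin}(n,1)$ is deterministically equal to $n$, so $\mathbb{P}(Z_0 \le k-1) = 0$ whenever $n \ge k$. Second, for $l \ge 1$, I would invoke the standard Chernoff lower-tail inequality in Bernstein form,
$$\mathbb{P}\bigl(\mathrm{Bin}(n,q) \le \lambda\bigr) \le \exp\!\left(-\frac{(nq-\lambda)^2}{2nq}\right), \qquad \lambda \le nq.$$
A brief calculus check shows that $x \mapsto (x-(k-1))^2/(2x)$ is nondecreasing on $[k-1,\infty)$, so substituting $x = np^l \ge np^{H-1} = k-1+t$ yields, uniformly in $l \ge 1$,
$$\mathbb{P}(Z_l \le k-1) \le \exp\!\left(-\frac{t^2}{2(k-1+t)}\right).$$

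The parameter $t$ in the statement is then simply chosen to make this right-hand side equal to $\delta'$. Rearranging $t^2/[2(k-1+t)] \ge -\ln\delta'$ into the quadratic $t^2 + 2(\ln\delta')t + 2(k-1)\ln\delta' \ge 0$ and taking the larger root recovers the stated formula. Substituting back into \eqref{exprtomin} and using $\sum_{l=1}^{H-1} 2^{l+1} = 2^{H+1}-4 \le 2^{H+1}-2$ gives
$$\frac{\sum_{l=0}^{H-1} 2^{l+1}\,\mathbb{P}(Z_l \le k-1)}{2^{H+1}-1} \le \frac{2^{H+1}-2}{2^{H+1}-1}\,\delta' \le \delta,$$
the last step being precisely the defining inequality for $\delta'$. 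This verifies \eqref{exprtomin} at our chosen $p$ and establishes the bound.

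The $k=1$ case proceeds identically except that I would replace the Chernoff step by the sharper identity $\mathbb{P}(\mathrm{Bin}(n,q)=0) = (1-q)^n \le e^{-nq}$; requiring $np^{H-1} \ge -\ln\delta'$ then yields $\mathbb{P}(Z_l = 0) \le \delta'$ uniformly in $l \ge 1$, and the summation step is unchanged. The main technical obstacle is really just the algebra: identifying that the Bernstein form of Chernoff is the one that produces a closed-form quadratic in $t$, checking that the relevant exponent is minimised at the deepest level $l=H-1$ (so the uniform bound is attained there), and absorbing the small edge cases (namely $\delta' = 1$ or a proposed $p$ that would exceed $1$) into the outer minimum in the statement rather than treating them by a separate argument.
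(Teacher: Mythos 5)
Your proposal is correct and takes essentially the same route as the paper: the same candidate $p$ with $np^{H-1}=k-1+t$, the same reduction to making the binomial lower-tail probability at most $\delta'$, and the same tail bound (your Bernstein-form Chernoff inequality is exactly the paper's Chernoff-plus-$D(x \parallel y)\ge \frac{(x-y)^2}{2y}$ step), yielding the identical quadratic determining $t$, with the $k=1$ case likewise handled via $(1-p^{H-1})^n\le e^{-np^{H-1}}$. The only cosmetic difference is in aggregating over levels: the paper bounds every $\mathbb{P}(Z_l\le k-1)$ by $\mathbb{P}(Z_{H-1}\le k-1)$ through a coupling argument and applies the tail bound once at $l=H-1$, whereas you apply the bound at each level $l\ge 1$ using monotonicity of the exponent and treat the $l=0$ term as zero.
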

	
 The following theorem, which summarizes the behaviour of $p_{k,n,\delta}$ on binary trees, is a direct consequence of Propositions~\ref{prop:p_lobnd} and~\ref{prop:p_upbnd}.
 
	\begin{thm}\label{thm:pkndelta}
		Let $k \ge 2$, $H \ge 2$ and $0 < \delta < \frac18$ be fixed. We then have
		$
		p_{k,n,\delta} = \Theta\left({\textstyle{\left(\frac{k}{n}\right)}^{\frac{1}{H-1}}}\right),
		$
		where the constants implicit in the $\Theta$-notation\footnote{The notation $a(n) = \Theta(b(n))$ means that there are positive constants $c_1$ and $c_2$ such that $c_1 b(n) \le a(n) \le c_2 b(n)$ for all sufficiently large $n$.} may be chosen to depend only on $H$ and $\delta$.
	\end{thm}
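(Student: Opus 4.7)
The plan is to read the theorem as essentially a packaging of Propositions \ref{prop:p_lobnd} and \ref{prop:p_upbnd}: the lower bound from the former and the upper bound from the latter, each massaged so that the multiplicative constant multiplying $(k/n)^{1/(H-1)}$ depends only on $H$ and $\delta$, not on $k$. Since $\delta<1/8$ and $H\ge 2$, the factor $(2^{H+1}-1)/(2^{H+1}-2)$ in the definition of $\delta'$ lies in $(1,7/6]$, so $\delta'<1$ and $a:=-\ln\delta'>0$ is a strictly positive constant depending only on $H$ and $\delta$; this observation will underlie the entire argument.

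For the lower bound I would simply apply Proposition \ref{prop:p_lobnd} and exploit $k\ge 2$: since $k-1\ge k/2$, we get
\[
p_{k,n,\delta} \;>\; \Bigl(\tfrac{k-1}{n}\Bigr)^{1/(H-1)} \;\ge\; \bigl(\tfrac12\bigr)^{1/(H-1)} \Bigl(\tfrac{k}{n}\Bigr)^{1/(H-1)},
\]
so the constant $c_1=(1/2)^{1/(H-1)}$ depends only on $H$.

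For the upper bound I would use Proposition \ref{prop:p_upbnd} and show that $k-1+t=O(k)$ uniformly in $k\ge 2$. Writing $a=-\ln\delta'$, the quantity $t=\sqrt{2(k-1)a+a^2}+a$ satisfies the rough bound $t\le \sqrt{2(k-1)a}+2a$. Hence
\[
k-1+t \;\le\; k+\sqrt{2ka}+2a \;\le\; k\bigl(1+\sqrt{a}+a\bigr)
\]
for all $k\ge 2$, using $\sqrt{2a/k}\le\sqrt{a}$ and $2a/k\le a$ in the last step. Feeding this back into Proposition \ref{prop:p_upbnd} gives $p_{k,n,\delta}\le c_2(k/n)^{1/(H-1)}$ with $c_2=(1+\sqrt{a}+a)^{1/(H-1)}$ depending only on $H$ and $\delta$.

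The result is an immediate consequence, valid for all $n\ge k$ (not merely for large $n$). The only step that requires a moment of care is the uniform-in-$k$ bound on $t/k$: if one is sloppy and just observes $t=O(\sqrt{k})$ with a constant depending on $\delta$, the resulting $c_2$ still depends on $k$ unless one explicitly divides through by $k$ and uses $k\ge 2$ as above. Apart from this bookkeeping, there is no genuine obstacle; the analytical content of the theorem has already been absorbed into the two preceding propositions.
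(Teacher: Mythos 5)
Your proposal is correct and follows exactly the paper's route: the theorem is stated there as a direct consequence of Propositions~\ref{prop:p_lobnd} and~\ref{prop:p_upbnd}, and your bookkeeping (using $k-1\ge k/2$ for the lower bound and the uniform-in-$k$ estimate $k-1+t\le k(1+\sqrt{a}+a)$ with $a=-\ln\delta'$ for the upper bound) correctly supplies the details the paper leaves implicit, yielding constants depending only on $H$ and $\delta$.
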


The plots in Figure~\ref{fig:tree_plots} corroborate the simulation results reported in \cite{ncc2018:probfwding}, thus providing a theoretical explanation for why $\tau_{k,n,\delta}$ increases with $n$. Another confirmation of this behaviour can be obtained by substituting $p_{k,n,\delta} \approx c {\bigl(\frac{k}{n}\bigr)}^{\frac{1}{H-1}}$, for any positive constant $c \equiv c(H,\delta)$ into the expression for $\tau_{k,n,\delta}$ in \eqref{eq:tau_tree}. This yields the approximation 
	\begin{equation}
	\tau_{k,n,\delta}  \ \approx  \ n \, \left[ \frac{(2c)^H \left(\frac{k}{n}\right)^{\frac{H}{H-1}} - 1}{2c\left(\frac{k}{n}\right)^{\frac{1}{H-1}}-1}\right], \label{taukrho2}
	\end{equation}
	which can be shown to be increasing in $n$.
	\begin{figure}
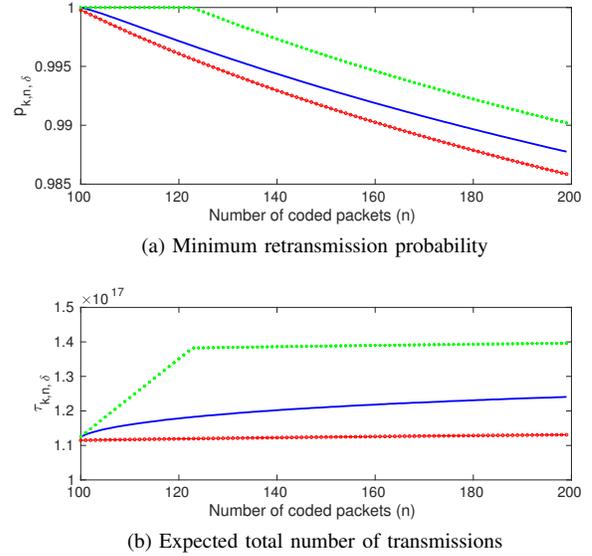
 
		\centering
		\subfloat[Minimum retransmission probability]{%
			\includegraphics[width=0.95\linewidth]{./Images/tree_addendum_prob}}
		\label{4a}
		\subfloat[Expected total number of transmissions]{%
			\includegraphics[width=0.95\linewidth]{./Images/tree_addendum_trans}}
		\label{4b}\\
		\caption{The middle curves are plots of the true values of $p_{k,n,\delta}$ and $\tau_{k,n,\delta}$ obtained from (\ref{exprtomin}) and \eqref{eq:tau_tree}, for $k = 100$, $\delta = 0.1$ and $H=50$. The other curves are bounds obtained via Propositions \ref{prop:p_lobnd} and \ref{prop:p_upbnd}, and \eqref{eq:tau_tree}.}
		\label{fig:tree_plots} 
	\end{figure}
	
The analysis in this section extends easily to the case of rooted $d$-ary trees, for any $d \ge 2$. In summary, introducing redundancy in the form of coding into the probabilistic retransmission protocol on a rooted $d$-ary tree is not beneficial in terms of the overall energy expenditure in the network.
	
\section{Grids}\label{sec:grid}

	\begin{figure}
		\centering
		\includegraphics[width=0.4\linewidth]{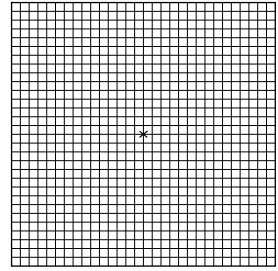}
		\caption{The source node ($\times$) is at the centre of the $31 \times 31$ grid.}
		\label{fig:grid31}
		\vspace{-1.5em}
	\end{figure}%
	
	For odd $m$, we consider the $m \times m$ grid $\Gamma_m := {[-\frac{m-1}{2},\frac{m-1}{2}]}^2 \cap \Z^2$ centred at the origin. The source node is assumed to be at the centre of the grid.  Fig.~\ref{fig:grid31} depicts this for $m=31$.  Simulation results for the probabilistic forwarding algorithm on grids of various sizes were presented in \cite{ncc2018:probfwding}. Some results from simulations on the $\G_{31}$ and $\G_{501}$ grids are shown in Figs.~\ref{fig:ergodic_pkndelta} and~\ref{fig:ergodic_trans} in Section~\ref{sec:grid_analysis}. In this section, we try to explain these observations by developing an analysis that is at least valid for large $m$. Specifically, we turn to the theory of site percolation on the integer lattice $\Z^2$ to explain the $p_{k,n,\delta}$ and $\tau_{k,n,\delta}$ curves obtained via simulations on large grids $\G_m$.
	
	
\subsection{Site percolation on $\Z^2$} \label{sec:perc}
	
	We start with a brief description of the site percolation process (see e.g.\ \cite{grimmett}) on $\Z^2$. This is an i.i.d.\ process ${(X_u)}_{u\in\Z^2}$, with $X_u \sim \text{Ber}(p)$ for each $u \in \Z^2$, where the probability $p \in [0,1]$ is a parameter of the process. A node or \emph{site} $u \in \Z^2$ is \emph{open} if $X_u = 1$, and is \emph{closed} otherwise. For $u = (u_x,u_y) \in \Z^2$, define $|u| := |u_x|+|u_y|$. Two sites $u$ and $v$ are joined by an edge, denoted by $u$---$v$, iff $|u-v| = 1$. The next few definitions are made with respect to a given realization of the process ${(X_u)}_{u\in\Z^2}$. Two sites $u$ and $v$ are connected by an \emph{open path}, denoted by $u \longleftrightarrow v$, if there is a sequence of sites $u_0 = u, u_1,u_2,\ldots,u_n = v$ such that $u_k$ is open for all $k \in \{0,1,\ldots, n\}$ and $u_{k-1}$---$u_k$ for all $k \in [n]$. The \emph{open cluster}, $C_u$, containing the site $u$ is defined as $C_u=\{v\in \mathbb{Z}^2 | u \longleftrightarrow v\}$. Thus, $C_u$ consists of all sites connected to $u$ by open paths. In particular, $C_u = \emptyset$ if $u$ is itself closed. The \emph{boundary}, $\partial C_u$, of a non-empty open cluster $C_u$ is the set of all closed sites $v \in \Z^2$ such that \mbox{$v$---$w$} for some $w \in C_u$. The set $C_u^+ := C_u \cup \partial C_u$ is called an \emph{extended cluster}. The cluster $C_u$ (resp.\ $C_u^+$) is termed an \emph{infinite open cluster (IOC)} (resp.\ \emph{infinite extended cluster (IEC)}) if it has infinite cardinality. Note that $C_u^+$ is infinite iff $C_u$ is infinite.

It is well-known that there exists a \emph{critical probability} $p_c \in (0,1)$ such that for all $p < p_c$, there is almost surely\footnote{with respect to the product measure $\otimes_{u} \nu_u$, with $\nu_u \sim \text{Ber}(p) \ \forall\, u \in \Z^2$.} no IOC, while for all $p > p_c$, there is almost surely a unique IOC. We do not know what happens at $p = p_c$, as the exact value of $p_c$ is itself not known (for site percolation on $\Z^2$). It is believed that $p_c \approx 0.59$ \cite[Ch.~1]{grimmett}. Another quantity of interest, which will play a crucial role in our analysis, is the \emph{percolation probability} $\theta(p)$, defined to be the probability that the origin $\0$ is in an IOC. In our analysis, we also consider the probability, $\theta^+(p)$, of the origin $\0$ being in an IEC. Clearly, for $p < p_c$, we have $\theta^+(p) = \theta(p) = 0$; for $p > p_c$, it is not difficult to see that $\theta^+(p) \ge \theta(p) > 0$. It is known that $\theta(p)$ is non-decreasing and infinitely differentiable in the region $p>p_c$ \cite{russo1978note}, but there is no analytical expression known for it. The following lemma expresses $\theta^+(p)$ in terms of $\theta(p)$. 

\begin{lem}
For any $p > p_c$, we have $\theta^+(p)=\frac{\theta(p)}{p}$.
\label{lem:theta}
\end{lem}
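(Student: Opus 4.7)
The plan is to leverage the independence of the state $X_\0$ from the remainder of the percolation configuration. At $p > p_c$ there is almost surely a unique IOC, which I denote $\mathcal{I}$, and let $\mathcal{I}^+$ be the corresponding IEC; thus $\theta(p) = \P(\0 \in \mathcal{I})$ and $\theta^+(p) = \P(\0 \in \mathcal{I}^+)$. I will show that the event $\{\0 \in \mathcal{I}^+\}$ is almost surely determined by the configuration $(X_u)_{u \neq \0}$, and hence is independent of $X_\0 \sim \text{Ber}(p)$. Given this independence, the identity $\{\0 \in \mathcal{I}\} = \{\0 \in \mathcal{I}^+\} \cap \{X_\0 = 1\}$ yields $\theta(p) = p \cdot \theta^+(p)$, from which $\theta^+(p) = \theta(p)/p$ follows.

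The central step is to identify $\{\0 \in \mathcal{I}^+\}$, up to a null set, with the event $F$ that $\0$ has an open neighbor lying in the almost surely unique IOC of the modified lattice $\Z^2 \setminus \{\0\}$. Denote this latter IOC by $\mathcal{I}^-$, and note that $F$ is measurable with respect to $(X_u)_{u \neq \0}$. The direction $F \Rightarrow \{\0 \in \mathcal{I}^+\}$ is immediate: if $u$ is an open neighbor of $\0$ lying in $\mathcal{I}^-$, then the open cluster of $u$ in $\Z^2$ contains $\mathcal{I}^-$ and is therefore infinite; by uniqueness this cluster equals $\mathcal{I}$, so $u \in \mathcal{I}$ and consequently $\0 \in \mathcal{I}^+$.

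For the reverse direction I would split on $X_\0$. If $X_\0 = 0$ and $\0 \in \mathcal{I}^+$, then $\0 \in \partial \mathcal{I}$, so $\0$ has an open neighbor in $\mathcal{I}$; since $\0$ is closed, the families of open clusters in $\Z^2$ and in $\Z^2 \setminus \{\0\}$ coincide, whence $\mathcal{I} = \mathcal{I}^-$ and $F$ holds. If $X_\0 = 1$ and $\0 \in \mathcal{I}$, then $\mathcal{I} \setminus \{\0\}$ is infinite and decomposes (within $\Z^2 \setminus \{\0\}$) into at most four open components, each of which touches at least one open neighbor of $\0$; since a finite union of finite sets is finite, at least one component must be infinite, and by uniqueness of the IOC in $\Z^2 \setminus \{\0\}$ this infinite component is precisely $\mathcal{I}^-$, which therefore contains an open neighbor of $\0$.

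The delicate point is this last case, which uses uniqueness of the IOC in both $\Z^2$ and $\Z^2 \setminus \{\0\}$ together with a pigeonhole argument on the at-most-four components of $\mathcal{I} \setminus \{\0\}$; all other pieces are routine. Once the almost sure equality $\{\0 \in \mathcal{I}^+\} = F$ is established, the $\sigma(X_u : u \neq \0)$-measurability of $F$ delivers its independence from $X_\0$, and the identity $\theta(p) = p\,\theta^+(p)$ follows.
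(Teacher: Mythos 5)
Your proof is correct and takes essentially the same route as the paper's: both rest on the observation that the event $\{\mathbf{0} \in C^+\}$ is determined by the states of the sites other than the origin, hence independent of $X_{\mathbf{0}}$, combined with the identity $\{\mathbf{0} \in C\} = \{\mathbf{0} \in C^+\} \cap \{X_{\mathbf{0}} = 1\}$. The only difference is that the paper asserts this measurability claim in one line, while you justify it carefully via the punctured lattice $\Z^2 \setminus \{\mathbf{0}\}$; your appeal to almost-sure uniqueness of the infinite cluster there is a standard fact (on the independent, positive-probability event $\{X_{\mathbf{0}} = 0\}$, two infinite clusters in $\Z^2 \setminus \{\mathbf{0}\}$ would contradict uniqueness in $\Z^2$), so this is a harmless elaboration rather than a gap.
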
 
\begin{proof}
Let $C$ and $C^+$ be the (unique) IOC and IEC, respectively. We then have
\begin{equation}
\theta(p) = \P(\0 \in C) = \P(\0 \in C^+ \text{ and } \0 \text{ is open}).
\label{eq:theta}
\end{equation}
Now, observe that the event $\{\0 \in C^+\}$ is determined purely by the states of the nodes other than the origin. Hence, this event is independent of the event that $\0$ is open. Thus, the right-hand side (RHS) of \eqref{eq:theta} equals $\theta^+(p) \cdot p$, which proves the lemma.
\end{proof}

\begin{figure}
	\centering
	\includegraphics[width=0.5\textwidth]{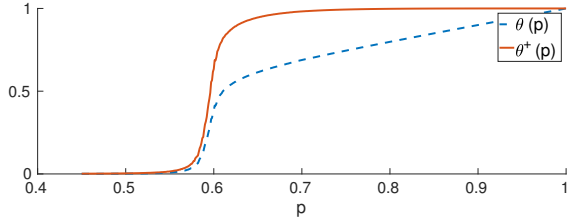}
	\caption{$\theta(p)$ and $\theta^+(p)$ vs. $p$}
	\label{fig:theta}
\end{figure}%

Fig.~\ref{fig:theta} plots $\theta(p)$ and $\theta^+(p)$ as functions of $p$, the former being obtained via simulations based on the theorem below. 

\begin{thm}\label{thm:theta}
Let $p>p_c$, and let $C$ and $C^+$, respectively, be the (almost surely) unique IOC and IEC of a site percolation process on $\mathbb{Z}^2$ with parameter $p$. Then, almost surely, we have
$$\lim_{m\rightarrow \infty}\frac{1}{m^2} |C\cap \G_m|=\theta(p) \ \text{ and } \ 
  \lim_{m\rightarrow \infty}\frac{1}{m^2}|C^+\cap \G_m|=\theta^+(p).$$
\end{thm}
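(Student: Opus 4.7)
The plan is to derive both limits from the multidimensional Birkhoff ergodic theorem applied to the i.i.d.\ field $X=(X_u)_{u\in\Z^2}$. Under the product measure the translations $T_v:(X_u)_u \mapsto (X_{u-v})_u$ are measure-preserving, and the $\Z^2$-action is mixing (hence ergodic), so the ergodic theorem on squares gives
$$\lim_{m\to\infty}\frac{1}{m^2}\sum_{u\in\G_m} f(T_u X) \;=\; \E[f(X)] \ \text{ a.s.}$$
for every integrable $f$. It then remains to choose the right $f$ in each case and to rewrite $|C\cap\G_m|$ and $|C^+\cap\G_m|$ as such ergodic sums.

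For the first limit, take $f(X):=\mathbf{1}\{\0 \text{ lies in some infinite open cluster of }X\}$, so that $\E[f]=\theta(p)$ by definition. Translation equivariance of the percolation structure gives $\mathbf{1}\{u\in\text{some IOC of }X\} = f(T_u X)$; any sign in the shift convention is absorbed by the symmetry of $\G_m$ about the origin. For $p>p_c$, the uniqueness-of-the-infinite-cluster theorem (Aizenman--Kesten--Newman / Burton--Keane) ensures that the IOC $C$ is almost surely the \emph{only} infinite open cluster, so $\{u\in C\}$ agrees a.s.\ with $\{u\in\text{some IOC}\}$. Hence
$$|C\cap \G_m| \;=\; \sum_{u\in\G_m} f(T_u X) \ \text{ a.s.,}$$
and the ergodic theorem yields the desired limit $\theta(p)$. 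The second claim is completely analogous: replace $f$ by $f^+(X):=\mathbf{1}\{\0 \text{ lies in some infinite extended cluster}\}$, for which $\E[f^+]=\theta^+(p)$; since the IEC is just the IOC together with its closed boundary, uniqueness of the IOC forces uniqueness of the IEC, and the same manipulation produces $m^{-2}|C^+\cap\G_m|\to \theta^+(p)$ a.s.

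The main obstacle is not computational but conceptual: one must identify the random set $C$ (resp.\ $C^+$), which is defined through a global property of the configuration, with a pointwise translation-invariant cluster indicator. This identification relies crucially on the uniqueness-of-the-infinite-cluster theorem, which is nontrivial and has to be invoked rather than re-proved. Once that step is made, the $\Z^2$-version of Birkhoff's theorem (standard for ergodic i.i.d.\ fields) finishes the argument with no further work.
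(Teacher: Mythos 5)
Your proposal is correct and follows essentially the same route as the paper: the paper also derives both limits by applying the multiparameter pointwise ergodic theorem for i.i.d.\ fields (Tempelman's theorem, Theorem~\ref{thm:tempelman}) to the indicators $\mathds{1}_{\{\0 \in C\}}$ and $\mathds{1}_{\{\0 \in C^+\}}$, with uniqueness of the infinite cluster for $p>p_c$ providing the identification of $|C\cap\G_m|$ (resp.\ $|C^+\cap\G_m|$) with the ergodic sums. Your write-up merely makes explicit the translation-equivariance and uniqueness steps that the paper leaves implicit.
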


The theorem is obtained as a straightforward application of an ergodic theorem for multi-dimensional i.i.d.\ random fields \cite[Proposition~8]{newman1981infinite} --- see Section~\ref{sec:ergthms} in the Appendix. Using the dominated convergence theorem (DCT), we also have
${\displaystyle \lim_{m\rightarrow \infty}} \E\left[\frac{1}{m^2}|C\cap \G_m|\right]=\theta(p)$ and
  ${\displaystyle \lim_{m\rightarrow \infty}} \E\left[\frac{1}{m^2}|C^+\cap \G_m|\right]=\theta^+(p).$
Based on this, to obtain an estimate of $\theta(p)$, the site percolation process with parameter $p$ was simulated on a $501 \times 501$ grid and the average fraction of nodes (averaged over $100$ realizations of the process) in the largest open cluster was taken to be the value of $\theta(p)$. These are the values of $\theta(p)$ plotted in Fig.~\ref{fig:theta}. We would like to emphasize that the plots in the figure should only be trusted for $p > p_c$, as Theorem~\ref{thm:theta} is only valid in that range. However, as the exact value of $p_c$ is unknown, simulation results are reported for the range of $p$ values shown in the plot.

\subsection{Relating site percolation to probabilistic forwarding} \label{sec:perc_to_fwding}
Site percolation on $\Z^2$ is a faithful model for probabilistic forwarding of a single packet on the infinite lattice $\Z^2$. The origin $\0$ is the source of the packet. The open cluster, $C_{\0}$, containing the origin $\0$ corresponds to the set of nodes that transmit (forward) the packet, and the extended cluster $C^+_{\0}$ corresponds to the set of nodes that receive the packet. The only caveat is that, since the source is assumed to always transmit the packet, we must consider only those realizations of the site percolation process in which the origin $\0$ is open. In other words, we must consider the site percolation process conditioned on the event that the origin is open. By extension, the probabilistic forwarding of $n$ coded packets on the lattice $\Z^2$ corresponds to $n$ independent site percolation processes on $\Z^2$, conditioned on the event that the origin is open in all $n$ percolations. 

\subsection{Analysis of probabilistic forwarding on a large (finite) grid} \label{sec:grid_analysis}
Our analysis of probabilistic forwarding on the finite grid $\G_m$ is based on the approximation described next. For the purposes of this discussion, we fix a forwarding probability $p$. Let $\cR_{k,n}^{\infty}$ denote the set of all nodes that receive at least $k$ of the $n$ coded packets in the probabilistic forwarding protocol on $\Z^2$. We use $|\cR_{k,n}^{\infty} \cap \G_m|$ as a proxy for $R_{k,n}^m$, which, as in Section~\ref{sec:formulation}, is defined to be the number of nodes receiving at least $k$ out of $n$ packets in the probabilistic forwarding protocol on $\G_m$. In general, it is only true that $R_{k,n}^m$ is stochastically dominated\footnote{A random variable $X$ is stochastically dominated by a random variable $Y$ if $\P(X \ge x) \le \P(Y \ge x)$ for all $x \in \R$. For non-negative random variables, this implies that $\E[X] \le \E[Y]$.} by $|\cR_{k,n}^{\infty} \cap \G_m|$, since a node in $\cR_{k,n}^\infty \cap \G_m$ could receive packets from the origin through paths in $\Z^2$ that do not lie entirely within $\G_m$. Nonetheless, we proceed under the assumption that $\E[R_{k,n}^m] \approx \E[|\cR_{k,n}^{\infty} \cap \G_m|]$ for large $m$. This is vindicated by the fact that our analysis based on this assumption matches the simulation results reasonably well --- see Figs.~\ref{fig:ergodic_pkndelta} and~\ref{fig:ergodic_trans}.
	
Recall also that we want values of the forwarding probability $p$ for which $\E[\frac{1}{m^2} R_{k,n}^m]$ is at least $1-\delta$, for some (small) $\delta > 0$. Hence, we need $\E[\frac{1}{m^2} |\cR_{k,n}^{\infty} \cap \G_m| ]\ge 1-\delta$. If we would like this to hold for all sufficiently large $m$, then $p$ must be such that $\cR_{k,n}^\infty$ has infinite cardinality. This implies, due to the correspondence between probabilistic forwarding and site percolation on $\Z^2$, that $p$ must be such that there exist infinite (open/extended) clusters in the site percolation process. Thus, we must operate in the super-critical region $p > p_c$. It can also be seen from the simulation results in Figs.~\ref{fig:ergodic_pkndelta} and \ref{fig:ergodic_trans} that $\tau_{k,n,\delta}$ is minimized when $p_{k,n,\delta}$ is in the super-critical region. We use these arguments as justification for considering only the $p > p_c$ case in our analysis.

The following theorem is the main result of this section. The proof, given in Section~\ref{sec:grid_proofs} of the Appendix, is obtained by carefully relating $\cR_{k,n}^\infty$ to the set, $C_{k,n}^+$, of all sites in $\Z^2$ that belong to the IEC containing $\0$ in at least $k$ out of $n$ independent percolations, conditioned on $\0$ being open in all $n$ percolations.

\begin{thm}
For $p>p_c$, we have
	\begin{align*}
	\lim_{m\rightarrow\infty} & \E\left[\frac{1}{m^2} \,|\cR^\infty_{k,n} \cap \G_m|\right] \\
	& \ \ \ \ \ \ = \ \sum_{t=k}^{n} \sum_{j=k}^{t}\binom{n}{t}\binom{t}{j}(\theta^+(p))^{t+j}(1-\theta^+(p))^{n-j} \; .
	\end{align*}
	\label{thm:ERkn}
\end{thm}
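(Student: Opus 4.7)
The plan is to relate $\cR_{k,n}^\infty$ to the auxiliary set $C_{k,n}^+$ described just before the theorem---whose expected density is computable by a clean ergodic argument---and then argue that $\cR_{k,n}^\infty$ and $C_{k,n}^+$ differ negligibly (per unit area) as $m\to\infty$. Throughout I work under the conditional measure in which $\0$ is open in every one of the $n$ independent percolations. Since $p>p_c$, almost surely each percolation $i$ has a unique IEC, which I denote $C^{+,(i)}$; let $C_{\0}^{+,(i)}$ denote the extended cluster containing $\0$ in percolation $i$. Node $v$ receives packet $i$ iff $v \in C_{\0}^{+,(i)}$, and whenever $\0 \in C^{+,(i)}$, uniqueness of the IEC gives $C_{\0}^{+,(i)} = C^{+,(i)}$, so $C_{k,n}^+ \subseteq \cR_{k,n}^\infty$.

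First I would show $\E\!\left[\frac{1}{m^2}\bigl|(\cR_{k,n}^\infty \setminus C_{k,n}^+) \cap \G_m\bigr|\right] \to 0$. Any $v$ in this difference must, in at least one percolation $i$, be joined to $\0$ through a \emph{finite} extended cluster. By the classical exponential decay of finite cluster sizes in supercritical 2D Bernoulli site percolation \cite{grimmett}, there exists $c=c(p)>0$ such that the probability of $v$ and $\0$ belonging to a common finite extended cluster is at most $e^{-c|v|}$. A union bound over $i\in[n]$ followed by summation over $v \in \G_m$ then gives an $O(1/m^2)$ bound, which vanishes.

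Next I would compute $\lim_{m\to\infty}\E\!\left[\frac{1}{m^2}|C_{k,n}^+\cap \G_m|\right]$. Let $T := \{i\in[n] : \0 \in C^{+,(i)}\}$. By Lemma~\ref{lem:theta} and independence across percolations, under the conditional measure $|T|$ is $\text{Bin}(n,\theta^+(p))$-distributed. Condition further on $T=\mathcal{T}$ with $|\mathcal{T}|=t$; then $v\in C_{k,n}^+$ iff $|\{i\in \mathcal{T}: v\in C^{+,(i)}\}|\ge k$. For each fixed $i$, the indicator field $(\mathbf{1}\{v\in C^{+,(i)}\})_{v\in \Z^2}$ is a translation-invariant functional of an i.i.d.\ Bernoulli$(p)$ field, with marginal $\theta^+(p)$, and across $i$ these fields are independent. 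Applying the multidimensional ergodic theorem (as used in Theorem~\ref{thm:theta}) to the joint field gives, a.s., $\frac{1}{m^2}|C_{k,n}^+\cap \G_m| \to \P(\text{Bin}(t,\theta^+(p))\ge k)$ when $t\ge k$, and $0$ when $t<k$. DCT transfers this to expectations, and averaging over the distribution of $|T|$ produces
\begin{equation*}
\sum_{t=k}^{n}\binom{n}{t}(\theta^+(p))^t(1-\theta^+(p))^{n-t}\,\P(\text{Bin}(t,\theta^+(p))\ge k),
\end{equation*}
which, after expanding $\P(\text{Bin}(t,\theta^+(p))\ge k)=\sum_{j=k}^{t}\binom{t}{j}(\theta^+(p))^j(1-\theta^+(p))^{t-j}$ and combining the exponents $(1-\theta^+(p))^{n-t}(1-\theta^+(p))^{t-j}=(1-\theta^+(p))^{n-j}$, equals the RHS of the theorem.

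The main obstacle is the reduction step: carefully justifying that the contribution from finite-cluster connections to $\0$ is asymptotically negligible. This rests on exponential decay of finite cluster sizes in the supercritical phase---a nontrivial classical fact that is the one ingredient beyond what is stated in the excerpt. A secondary subtlety is that all ergodic limits live under a conditional measure (first $\0$ open in every percolation, later $T=\mathcal{T}$); because both events have positive probability, the almost-sure limits under the unconditional Bernoulli measure transfer unchanged, and boundedness of the indicator averages makes DCT available for the passage from a.s.\ convergence to convergence of expectations.
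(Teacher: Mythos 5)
Your argument is correct and is essentially the paper's own proof: you condition on the (random) set $T$ of percolations whose IEC contains $\0$, use Lemma~\ref{lem:theta} to get $|T|\sim\mathrm{Bin}(n,\theta^+(p))$ under the conditioning on $\mathrm{O}$ (the paper's Proposition on $\P^{\mathrm{o}}(A_T^+)$), obtain the main term $\P(\mathrm{Bin}(t,\theta^+(p))\ge k)$ from the multidimensional ergodic theorem restricted to the percolations in $T$ (the paper's Theorem~\ref{thm:prodtheta} and its conditional-expectation consequence), and show the leftover sites are negligible --- note only that your set ``$C_{k,n}^+$'' must be read as the paper's $C_{k,T}^+$ (IECs containing $\0$, i.e.\ indexed by $T$), since under the appendix's definition of $C_{k,n}^+$ the inclusion $C_{k,n}^+\subseteq\cR_{k,n}^\infty$ fails. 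The one genuine divergence is the error term: you import exponential decay of truncated connectivities in supercritical site percolation, a nontrivial external estimate that is stronger than needed, whereas the paper simply observes that the discrepancy sites lie in a union of at most $n$ almost-surely finite clusters containing $\0$, so $\frac{1}{m^2}$ times their count in $\G_m$ tends to $0$ almost surely and the bounded convergence (DCT) argument already in use finishes the job.
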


From the discussion prior to the theorem, the left-hand side (LHS) of the equality stated in the theorem is our proxy for ${\displaystyle \lim_{m \to \infty}} \E[\frac{1}{m^2} \, R_{k,n}^m]$. Thus, for large grids $\G_m$, we take $p_{k,n,\delta}$ to be the least value of $p$ for which 
\begin{equation}
\sum_{t=k}^{n} \sum_{j=k}^{t}\binom{n}{t}\binom{t}{j}(\theta^+(p))^{t+j}(1-\theta^+(p))^{n-j} \ \ge \ 1- \delta \, .
\label{eq:pkndelta_grid}
\end{equation}
This can be evaluated numerically using the values of $\theta^+(p)$ plotted in Fig.~\ref{fig:theta}. The results thus obtained are shown in Fig.~\ref{fig:ergodic_pkndelta}. It is clear that these results match very well with those obtained from simulations on a $501 \times 501$ grid. 

	\begin{figure}[t]
		\centering
		\includegraphics[width=0.5\textwidth]{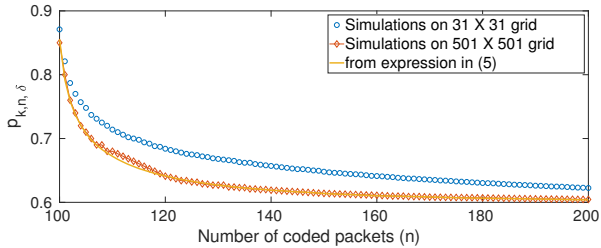}
		\caption{Comparison of the minimum forwarding probability obtained via simulations on a $31\times 31$ grid and a $501\times 501$ grid, with the results obtained numerically from \eqref{eq:pkndelta_grid}, for $k=100$ data packets and $\delta=0.1$.}
		\label{fig:ergodic_pkndelta}
		\vspace*{-1em}
	\end{figure}%

We next look into estimating the expected total number of transmissions at a given forwarding probability $p$. Consider the transmission of a single packet on the infinite lattice $\mathbb{Z}^2$. The set of nodes transmitting this packet is simply the open cluster $C_{\0}$ in the percolation framework. Thus, arguing as for packet receptions above, the expected number of transmissions for probabilistic forwarding on a large (but finite) grid $\G_m$ is well-approximated by $\E\bigl[|C_{\0} \cap \G_m| \ \big| \ \0 \text{ is open}\bigr]$. In Section~\ref{sec:grid_proofs} of the Appendix, we prove the following result.

\begin{prop} For site percolation with $p > p_c$, we have
$$
 \lim_{m \to \infty} \frac{1}{m^2}\E\bigl[|C_{\0} \cap \G_m| \ \big| \ \0 \text{ is open}\bigr] \ = \ \frac{{\theta(p)}^2}{p}.
 $$
 \label{prop:C0}
 \end{prop}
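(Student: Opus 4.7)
The plan is to reduce everything to Theorem~\ref{thm:theta} via a clean decomposition of the event that $\0$ is open.

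First I would rewrite the conditional expectation as
$$\E\bigl[|C_{\0}\cap\G_m|\,\big|\,\0\text{ open}\bigr] \;=\; \tfrac{1}{p}\,\E\bigl[|C_{\0}\cap\G_m|\,\mathbf{1}_{\0\text{ open}}\bigr],$$
so it suffices to show that $m^{-2}\,\E[|C_{\0}\cap\G_m|\,\mathbf{1}_{\0\text{ open}}] \to \theta(p)^2$. The key observation is that on the event $\{\0\text{ open}\}$, $C_{\0}$ is either infinite or finite, and these two cases behave very differently under the scaling $m^{-2}$. Since $p>p_c$, the IOC $C$ is almost surely unique, so on $\{\0\in C\}$ we have $C_{\0}=C$ and therefore $|C_{\0}\cap\G_m|=|C\cap\G_m|$. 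On the complementary event $\{\0\text{ open},\,|C_{\0}|<\infty\}$, $|C_{\0}\cap\G_m|\le|C_{\0}|$ is a finite random variable not depending on $m$.

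Next I would combine these observations with Theorem~\ref{thm:theta} to obtain the pointwise limit
$$\frac{1}{m^2}\,|C_{\0}\cap\G_m|\,\mathbf{1}_{\0\text{ open}} \;\xrightarrow[m\to\infty]{\text{a.s.}}\; \theta(p)\,\mathbf{1}_{\0\in C},$$
since on $\{\0\in C\}$ the ergodic statement $m^{-2}|C\cap\G_m|\to\theta(p)$ applies, while on the finite-cluster part of $\{\0\text{ open}\}$ the ratio vanishes as $m\to\infty$. Because $|C_{\0}\cap\G_m|\le|\G_m|=m^2$, the sequence $m^{-2}|C_{\0}\cap\G_m|\mathbf{1}_{\0\text{ open}}$ is uniformly bounded by $1$, so the dominated convergence theorem yields
$$\lim_{m\to\infty}\frac{1}{m^2}\,\E\bigl[|C_{\0}\cap\G_m|\,\mathbf{1}_{\0\text{ open}}\bigr] \;=\; \theta(p)\,\P(\0\in C) \;=\; \theta(p)^2.$$
Dividing by $p$ gives the claimed limit $\theta(p)^2/p$.

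The only subtlety is justifying that the finite-cluster contribution is indeed negligible after scaling, which is immediate once one notes that $|C_{\0}|\mathbf{1}_{|C_{\0}|<\infty}$ is an almost-surely finite random variable independent of $m$; no integrability of $|C_{\0}|$ restricted to finite clusters is needed, because the a.s.\ convergence plus the deterministic bound $m^{-2}|C_{\0}\cap\G_m|\le 1$ already makes DCT applicable. Thus the proof is essentially a two-line ergodic argument once the decomposition $\{\0\text{ open}\}=\{\0\in C\}\sqcup\{\0\text{ open},\,|C_{\0}|<\infty\}$ is in place, and the only ingredient beyond Theorem~\ref{thm:theta} is the uniqueness of the IOC in the supercritical regime.
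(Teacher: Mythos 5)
Your proposal is correct and follows essentially the same route as the paper's proof: decompose the event $\{\0 \text{ open}\}$ according to whether $\0$ lies in the unique IOC $C$, note that $C_{\0}=C$ on that event while the finite-cluster contribution vanishes under the $m^{-2}$ scaling, invoke Theorem~\ref{thm:theta} together with the dominated convergence theorem, and finish with $\P(\0\in C \mid \0 \text{ open})=\theta(p)/p$. The only cosmetic difference is that you work with the unconditional expectation of $|C_{\0}\cap\G_m|\,\mathds{1}_{\{\0\text{ open}\}}$ and a single application of DCT, whereas the paper splits into conditional expectations given $A=\{\0\in C\}$ and $A^c$; the content is identical.
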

 
 Thus, in probabilistic forwarding of a single packet on a large grid $\G_m$, the expected number of transmissions, normalized by the grid size $m^2$, is approximately $\frac{{\theta(p)}^2}{p}$. Hence, when we have $n$ coded packets, by linearity of expectation, 
 the expected total number of transmissions, again normalized by the grid size $m^2$, is approximately $n \, \frac{{\theta(p)}^2}{p}$. In particular, setting $p = p_{k,n,\delta}$, we obtain
	\begin{equation}\label{eq:taukndelta}
	\frac{1}{m^2} \, \tau_{k,n,\delta}\approx n \frac{{\theta(p_{k,n,\delta})}^2}{p_{k,n,\delta}},
	\end{equation}
provided that $p_{k,n,\delta} > p_c$.

	\begin{figure}
		\centering
		\includegraphics[width=0.5\textwidth]{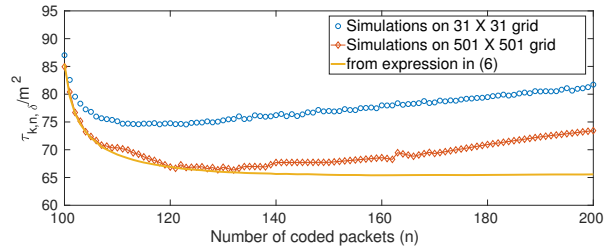}
		\caption{Comparison of the expected total number of transmissions normalized by the grid size $m^2$, obtained via simulations on $\G_{31}$ and $\G_{501}$, with the expression from \eqref{eq:taukndelta}, for $k=100$ data packets and $\delta=0.1$.}
		\label{fig:ergodic_trans}
		\vspace*{-1em}
	\end{figure}%
	
Fig.~\ref{fig:ergodic_trans} compares, for $k=100$ data packets and $\delta = 0.1$, the values of $\frac{1}{m^2} \tau_{k,n,\delta}$ obtained using \eqref{eq:taukndelta} with those obtained via simulations on the $\G_{31}$ and $\G_{501}$ grids. The curve based on \eqref{eq:taukndelta} initially tracks the $\G_{501} $ curve well, but trails off after $n = 130$. Given that the corresponding $p_{k,n,\delta}$ curves are well-matched (Fig.~\ref{fig:ergodic_pkndelta}), this is perhaps attributable to the fact that the $\theta(p)$ values from Fig.~\ref{fig:theta} are not very reliable. As even a small change in $\theta(p)$ would significantly affect $n \frac{{\theta(p)}^2}{p}$, better estimates of $\theta(p)$ may correct the discrepancy observed. Nonetheless,  our analysis provides theoretical validation, at least for large grids, for the observed initial decrease in $\tau_{k,n,\delta}$ as a function of $n$, thus indicating a benefit to introducing some coding into the probabilistic forwarding mechanism on grids.

	
		\section*{Acknowledgements} The research presented in this paper was supported in part by a Cisco PhD Fellowship awarded to the first author, and in part by the DRDO-IISc ``Frontiers'' research programme.
	
	\bibliographystyle{IEEEtran}
	\bibliography{references}

\begin{thebibliography}{10}
\providecommand{\url}[1]{#1}
\csname url@samestyle\endcsname
\providecommand{\newblock}{\relax}
\providecommand{\bibinfo}[2]{#2}
\providecommand{\BIBentrySTDinterwordspacing}{\spaceskip=0pt\relax}
\providecommand{\BIBentryALTinterwordstretchfactor}{4}
\providecommand{\BIBentryALTinterwordspacing}{\spaceskip=\fontdimen2\font plus
\BIBentryALTinterwordstretchfactor\fontdimen3\font minus
  \fontdimen4\font\relax}
\providecommand{\BIBforeignlanguage}[2]{{%
\expandafter\ifx\csname l@#1\endcsname\relax
\typeout{** WARNING: IEEEtran.bst: No hyphenation pattern has been}%
\typeout{** loaded for the language `#1'. Using the pattern for}%
\typeout{** the default language instead.}%
\else
\language=\csname l@#1\endcsname
\fi
#2}}
\providecommand{\BIBdecl}{\relax}
\BIBdecl

\bibitem{tseng2002broadcast}
Y.-C. Tseng, S.-Y. Ni, Y.-S. Chen, and J.-P. Sheu, ``The broadcast storm
  problem in a mobile ad hoc network,'' \emph{Wireless Networks}, vol.~8, no.
  2/3, pp. 153--167, 2002.

\bibitem{sasson2003probabilistic}
Y.~Sasson, D.~Cavin, and A.~Schiper, ``Probabilistic broadcast for flooding in
  wireless mobile ad hoc networks,'' in \emph{Proc.\ WCNC 2003, vol.\ 2}, March
  16--20, 2003, pp. 1124--1130.

\bibitem{haas2006gossip}
Z.~J. Haas, J.~Y. Halpern, and L.~Li, ``Gossip-based ad hoc routing,''
  \emph{IEEE/ACM Trans.\ Networking}, vol.~14, no.~3, pp. 479--491, 2006.

\bibitem{ncc2018:probfwding}
\BIBentryALTinterwordspacing
{B.~R.\ Vinay Kumar}, R.~Antony, and N.~Kashyap, ``The effect of introducing
  redundancy in a probabilistic forwarding protocol,'' in \emph{Proc.\ NCC
  2018, \emph{IIT-Hyderabad}}, Feb 25--28, 2018. [Online]. Available:
  \url{http://arxiv.org/abs/1901.02033}
\BIBentrySTDinterwordspacing

\bibitem{roth2006}
R.~M. Roth, \emph{Introduction to Coding Theory}.\hskip 1em plus 0.5em minus
  0.4em\relax Cambridge Univ.\ Press, 2006.

\bibitem{vaze2015random}
R.~Vaze, \emph{Random Wireless Networks}.\hskip 1em plus 0.5em minus
  0.4em\relax Cambridge Univ.\ Press, 2015.

\bibitem{grimmett}
G.~Grimmett, \emph{Percolation, 2nd ed.}\hskip 1em plus 0.5em minus 0.4em\relax
  Springer-Verlag, 1999.

\bibitem{russo1978note}
L.~Russo, ``A note on percolation,'' \emph{Zeitschrift f{\"u}r
  Wahrscheinlichkeitstheorie und verwandte Gebiete}, vol.~43, no.~1, pp.
  39--48, 1978.

\bibitem{newman1981infinite}
C.~Newman and L.~Schulman, ``Infinite clusters in percolation models,''
  \emph{Journal of Statistical Physics}, vol.~26, no.~3, pp. 613--628, 1981.

\bibitem{jogdeo1968monotone}
K.~Jogdeo and S.~Samuels, ``Monotone convergence of binomial probabilities and
  a generalization of {R}amanujan's equation,'' \emph{The Annals of
  Mathematical Statistics}, vol.~39, no.~3, pp. 1191--1195, 1968.

\bibitem{Okamoto1959}
M.~Okamoto, ``Some inequalities relating to the partial sum of binomial
  probabilities,'' \emph{Annals of the Institute of Statistical Mathematics},
  vol.~10, no.~1, pp. 29--35, Mar 1959.

\bibitem{krengel1985}
U.~Krengel, \emph{Ergodic Theorems}.\hskip 1em plus 0.5em minus 0.4em\relax de
  Gruyter, 1985.

\end{thebibliography}

	
	
	\appendix
	
\subsection{Proof of Lemma \ref{lem:pkndelzero}} \label{sec:lem1proof}
		(a)\ For any $n > 0$, the random variables $R_{k,n}$ and $R_{k,n-1}$ can be coupled as follows: If the $k$ data packets are encoded into $n$ coded packets, then $R_{k,n-1}$ (resp.\ $R_{k,n}$) is realized as the number of nodes, including the source node, that receive at least $k$ of the \emph{first} $n-1$ (resp.\ at least $k$ of the $n$) coded packets. It is then clear that $\E[\frac{1}{N} R_{k,n}] \ge \E[\frac{1}{N} R_{k,n-1}]$, and hence, by \eqref{def:pkndelta}, we have $p_{k,n,\delta} \le p_{k,n-1,\delta}$.
		
		(b)\  From the $n$ coded packets, create $\lfloor\frac{n}{k}\rfloor$ non-overlapping (i.e., disjoint)  groups of $k$ packets each. For $i=1,2,\cdots,\floor{\frac{n}{k}}$, let $A_i$ be the event that the $i$th group of $k$ coded packets is received by at least $(1-\delta/2)N$ nodes. The events $A_i$ are mutually independent and have the same probability of occurrence. For any $p>0$, we have $\P(A_i)$ being strictly positive (but perhaps small). Hence, $$\P(\text{at least one } A_i \text{ occurs}) = 1-\bigl(1-\P(A_1)\bigr)^{\lfloor\frac{n}{k}\rfloor} \ge 1-\frac{\delta}{2}$$ for all sufficiently large $n$, so that $\P\left(\frac{R_{k,n}}{N}\ge 1-\delta/2\right)\ge1-\delta/2.$ This further implies that $\frac{\E\left[R_{k,n}\right]}{N}\ge (1-\delta/2)(1-\delta/2) \ge 1-\delta$. Thus, for any $p>0$, we have $p_{k,n,\delta}\le p$ for all sufficiently large $n$.	\endIEEEproof

\subsection{Derivations of results for binary trees} \label{sec:bintree_proofs}
	\begin{IEEEproof}[Proof of Proposition \ref{prop:p_lobnd}]
		Suppose that $p$ is such that $np^{H-1} \le k-1$. Then, $Z_{H-1}$ has mean at most $k-1$. As a result, the median of $Z_{H-1}$ is also at most $k-1$ \cite[Corollary~3.1]{jogdeo1968monotone}. In other words, $P(Z_{H-1} \le k-1) \ge \frac12$. Consequently, $\sum_{l=0}^{H-1}2^{l+1}\mathbb{P}(Z_l \leq k-1) \ge 2^H \mathbb{P}(Z_{H-1} \le k-1) \ge 2^{H-1}$, so that the LHS of \eqref{exprtomin} is at least $\frac{2^{H-1}}{2^{H+1}-1} \ge \frac{2^{H-1}}{2^{H+1}} = 0.25 > \delta$. Hence, for \eqref{exprtomin} to hold, we must have $np^{H-1} > k-1$, from which the lower bound on $p_{k,n,\delta}$ follows.
		
		In the case of $k=1$, suppose that $p \le {\left(\frac{1}{n}\right)}^{H-1}$. Then, $\mathbb{P}(Z_{H-1} = 0) = (1-p^{H-1})^n \ge (1-\frac{1}{n})^n \ge (1-\frac{1}{2})^2 = 0.25$, for all $n \ge 2$. Hence, $\sum_{l=0}^{H-1}2^{l+1}\mathbb{P}(Z_l \leq k-1) \ge 2^H \mathbb{P}(Z_{H-1} = 0) \ge 2^{H-2}$. As a result, the LHS of \eqref{exprtomin} is at least $\frac{2^{H-2}}{2^{H+1}} = 0.125 > \delta$. Thus, again, for \eqref{exprtomin} to hold, we need $p > {\left(\frac{1}{n}\right)}^{H-1}$.	
		\end{IEEEproof}
	
	\bigskip
	
	\begin{IEEEproof}[Proof of Proposition \ref{prop:p_upbnd}]
		Note first that for all $l \le H-1$, we have\footnote{This is easily shown by a standard coupling argument --- see e.g., \cite[Lemma~IV.1]{ncc2018:probfwding}.} $\mathbb{P}(Z_l \le k-1) \le \mathbb{P}(Z_{H-1} \le k-1)$. Hence, $\sum_{l=0}^{H-1}2^{l+1}\mathbb{P}(Z_l \leq k-1) \le \bigl(\sum_{l=0}^{H-1} 2^{l+1}\bigr) \mathbb{P}(Z_{H-1} \le k-1) = (2^{H+1}-2) \mathbb{P}(Z_{H-1} \le k-1)$. Thus, to show that \eqref{exprtomin} holds, it suffices to prove that $\mathbb{P}(Z_{H-1} \le k-1) \le \delta \, \left(\frac{2^{H+1}-1}{2^{H+1}-2}\right)$. It is, therefore, enough to show that $\mathbb{P}(Z_{H-1} \le k-1) \le \delta'$.
		
		Consider $k = 1$ first. Take $p = \min\left\{1,{\left(\frac{C'}{n}\right)}^{\frac{1}{H-1}}\right\}$, where $C' = -\ln\delta'$. Then, $\mathbb{P}(Z_{H-1} \le k-1) = \mathbb{P}(Z_{H-1} = 0) = (1-p^{H-1})^n$, which, by choice of $p$, is either equal to $0$ (if $n \le C'$) or $(1-C'/n)^n$ (if $C' > n$). In either case, $\mathbb{P}(Z_{H-1} = 0) $ is less than $e^{-C'} = \delta'$, as needed.
		
		Consider $k \ge 2$ now. Take $p = \min\left\{1,{\left(\frac{k-1+t}{n}\right)}^{\frac{1}{H-1}}\right\}$, where $t$ is as in the statement of the proposition. For $n \ge k-1+t$, we have $Z_{H-1} \sim \text{Bin}(n,\frac{k-1+t}{n})$, so that \begin{align*}
		\mathbb{P}(Z_{H-1} \le k-1) & = \mathbb{P}\bigl(Z_{H-1} \le n({\textstyle \frac{k-1+t}{n} - \frac{t}{n}})\bigr) \notag \\
		& \le  e^{-n \, D(\frac{k-1}{n} \parallel \frac{k-1+t}{n})}
		\end{align*}
		via the Chernoff bound. Here, $D(\cdot \parallel \cdot)$ denotes the Kullback-Leibler divergence, defined as $D(x \parallel y) = x \ln \frac{x}{y} + (1-x) \ln \frac{1-x}{1-y}$. Using the bound $D(x \parallel y) \ge \frac{(x-y)^2}{2y}$, valid for $x \le y$ \cite{Okamoto1959}, we further have
		$$
		\mathbb{P}(Z_{H-1} \le k-1) \le e^{-n\left[\frac{(t/n)^2}{2(k-1+t)/n}\right]} = e^{-\frac{t^2}{2(k-1+t)}}.
		$$
		Thus, to conclude that $\mathbb{P}(Z_{H-1} \le k-1) \le \delta'$, as required, it suffices to show that $\frac{t^2}{2(k-1+t)} \ge -\ln \delta'$. This can be re-written as $t^2 + 2t \ln\delta' + 2(k-1)\ln\delta' \ge 0$, or equivalently, $(t+\ln\delta')^2 + 2(k-1)\ln\delta' - (\ln\delta')^2 \ge 0$, which is evidently satisfied by our choice of $t$. 	
		\end{IEEEproof}

\subsection{Ergodic theorems} \label{sec:ergthms}
	Let $\mathsf{A}$ be a finite alphabet, and $\nu$ a probability measure on it. Consider the probability space $(\Omega,\mathcal{F},\P)$, where $\Omega = \mathsf{A}^{\Z^2}$, $\mathcal{F}$ is the $\sigma$-algebra of cylinder sets, and $\P$ is the product measure $\otimes_{u} \nu_u$ with $\nu_u = \nu$ for all $u \in Z^2$. For $z \in \Z^2$, define the shift operator $T_z: \Omega \to \Omega$ that maps $\omega = {(\omega_u)}_{u \in \Z^2}$ to $T_z\omega$ such that $(T_z\omega)_u = \omega_{u-z}$ for all $u \in \Z^2$. Correspondingly, for a random variable $X$ defined on this probability space, set $T_zX := X \circ T_{-z}$, i.e., $(T_zX)(\omega) = X(T_{-z}\omega)$ for all $\omega \in \Omega$.
	
	The following theorem is a special case of Tempelman's pointwise ergodic theorem (see e.g., \cite[Ch.~6]{krengel1985}). For $\mathsf{A} = \{0,1\}$, this was stated as Proposition~8 in \cite{newman1981infinite}. 
	
\begin{thm}
For any random variable $X$ on $(\Omega,\mathcal{F},\P)$ with finite mean, we have 
$$
\lim_{m \to \infty} \frac{1}{m^2} \sum_{z \in \G_m} T_zX \, = \, \E[X]  \ \ \ \ \ \ \P\text{-a.s.},
$$
where $\G_m := [-\frac{m-1}{2},\frac{m-1}{2}]^2 \cap \Z^2$ is the $m \times m$ grid ($m$ odd).
\label{thm:tempelman}	
\end{thm}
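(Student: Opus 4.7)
The plan is to deduce the statement from the multidimensional pointwise ergodic theorem (Tempelman's), which requires three ingredients: a measure-preserving $\Z^2$-action, ergodicity of that action, and regularity of the averaging sequence $\{\G_m\}$. The bulk of the work is the verification of the first two; the third is standard for cubes.

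First I would check that each shift $T_z$ preserves $\P$. Since $\P = \otimes_{u \in \Z^2} \nu_u$ with $\nu_u = \nu$ for every $u$, the pullback of any cylinder set by $T_z$ merely relabels finitely many coordinates and leaves its probability unchanged. A $\pi$--$\lambda$ extension from the generating $\pi$-system of cylinders to all of $\mathcal{F}$ then yields $\P \circ T_z^{-1} = \P$ for every $z \in \Z^2$.

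Next I would establish ergodicity of the $\Z^2$-action by a mixing argument. For any two cylinder events $A, B$ depending on finite coordinate sets $S_A, S_B \subset \Z^2$, the set $T_z B$ depends on coordinates in $S_B + z$, which is disjoint from $S_A$ once $|z|$ exceeds the diameter of $S_A \cup S_B$. Hence $A$ and $T_z B$ are independent for all sufficiently large $|z|$, so $\P(A \cap T_z B) = \P(A)\P(B)$. A standard approximation promotes this to arbitrary $A, B \in \mathcal{F}$, so the action is strongly mixing and hence ergodic; equivalently, any shift-invariant event is a tail event of the i.i.d.\ family $(X_u)_{u \in \Z^2}$ and is thus trivial by Kolmogorov's zero--one law.

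With measure-preservation and ergodicity in hand, Tempelman's pointwise ergodic theorem applies along the sequence of cubes $\G_m$, which is the prototypical regular (F\o lner) sequence in $\Z^2$ since $|\partial \G_m|/|\G_m| = O(1/m) \to 0$. The theorem yields
\begin{equation*}
\lim_{m \to \infty} \frac{1}{|\G_m|} \sum_{z \in \G_m} T_z X \ = \ \E[X \mid \mathcal{I}] \quad \P\text{-a.s.},
\end{equation*}
where $\mathcal{I}$ is the $\sigma$-algebra of shift-invariant events; ergodicity collapses this conditional expectation to the constant $\E[X]$, and since $|\G_m| = m^2$ the claimed identity follows. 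The main obstacle, if one insists on a self-contained proof rather than citing Krengel's monograph, is the multidimensional ergodic theorem itself along F\o lner averages; granting that black box, the real content of the argument is just the routine verification that an i.i.d.\ product field satisfies its hypotheses.
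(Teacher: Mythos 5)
Your proposal is correct and follows essentially the same route as the paper, which simply states this result as a special case of Tempelman's pointwise ergodic theorem (citing Krengel's monograph and Proposition~8 of Newman--Schulman) without further argument. Your verification of the hypotheses --- shift-invariance of the product measure, ergodicity of the $\Z^2$-shift action via the cylinder-set mixing argument, and regularity of the cubes $\G_m$ --- is exactly the standard justification that the paper leaves implicit.
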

	
	The theorem applies to the case of site percolation, in which $\nu$ above is the Bernoulli($p$) measure on $\mathsf{A} = \{0,1\}$. Applying the theorem with $X = {\mathds{1}}_{\{\0 \in C\}}$, the indicator function of $\0$ being in the (unique when $p > p_c$) IOC $C$, and again with $X = {\mathds{1}}_{\{\0 \in C^+\}}$, we obtain Theorem~\ref{thm:theta}.
	
	Next, with $\mathsf{A} = \{0,1\}^n$ and $\nu$ the product of $n$ independent Bernoulli($p$) measures, we are in the setting of $n$ independent site percolations. In this case, taking $X$ to be the indicator function of $\0$ being in the IEC in at least $k$ of the $n$ independent percolations, and applying Theorem~\ref{thm:tempelman}, we obtain Theorem~\ref{thm:prodtheta} below.

\subsection{Derivations of results for grids} \label{sec:grid_proofs}

Consider $n$ independent site percolation processes on $\Z^2$, with parameter $p > p_c$. Let O denote the event that the origin is open in all $n$ percolations. We will use $\P^\mathrm{o}$ and $\E^\mathrm{o}$, respectively, to denote the probability measure and expectation operator conditioned on the event O, and $\P$ and $\E$ for the unconditional versions of these.

Since $p > p_c$, each percolation has a unique IOC and IEC, almost surely with respect to $\P$ ($\P$-a.s.). Let $C_{k,n}^+$ be the set of sites that are in the IEC in at least $k$ out of the $n$ percolations. We then have the following theorem.

\begin{thm} We have
$$\lim_{m \rightarrow \infty}\frac{1}{m^2} |C_{k,n}^+ \cap \G_m| = \theta^+_{k,n}(p) \ \ \ \ \ \ \P\text{-a.s.}$$
where 
$$
\theta^+_{k,n}(p) =  \sum_{j = k}^n \binom{n}{j} (\theta^+(p))^j (1-\theta^+(p))^{n-j}
$$
is the probability that the origin belongs to the IEC in at least $k$ out of the $n$ percolations.
\label{thm:prodtheta}
\end{thm}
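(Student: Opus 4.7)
\textbf{Proof plan for Theorem~\ref{thm:prodtheta}.} The plan is to reduce the claim to a direct application of Tempelman's ergodic theorem (Theorem~\ref{thm:tempelman}), following the hint in the paragraph preceding the statement. I work on the product probability space $(\Omega,\mathcal{F},\P)$ with alphabet $\mathsf{A} = \{0,1\}^n$ and per-site measure $\nu = \mathrm{Ber}(p)^{\otimes n}$, so that a sample $\omega \in \Omega$ simultaneously encodes the configurations of $n$ independent site percolation processes on $\Z^2$. For $i \in [n]$, let $C^{+,(i)}$ denote the (a.s.\ unique, since $p > p_c$) IEC in the $i$th percolation, and define the random variable $X := \mathds{1}_{\{\0 \in C_{k,n}^+\}}$, i.e., the indicator that $\0$ lies in the IEC of at least $k$ of the $n$ percolations.

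First I would verify the translation-covariance identity $T_z X = \mathds{1}_{\{z \in C_{k,n}^+\}}$ for every $z \in \Z^2$. This rests on the observation that, for each $i$, the IEC is an equivariant functional of the configuration: in the shifted configuration $T_{-z}\omega$, the cluster at the origin of the $i$th percolation is the translate by $-z$ of the cluster containing $z$ in $\omega$. Uniqueness of the IEC then gives $\0 \in C^{+,(i)}(T_{-z}\omega)$ iff $z \in C^{+,(i)}(\omega)$, from which $(T_zX)(\omega) = X(T_{-z}\omega) = \mathds{1}_{\{z \in C_{k,n}^+(\omega)\}}$ follows by summing over $i$. Summing this identity over $z \in \G_m$ yields the key relation
\[
\sum_{z \in \G_m} T_z X \;=\; |C_{k,n}^+ \cap \G_m|.
\]

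Since $X$ is bounded and hence has finite mean, Theorem~\ref{thm:tempelman} applies and gives
\[
\lim_{m \to \infty} \frac{1}{m^2}\,|C_{k,n}^+ \cap \G_m| \;=\; \E[X] \qquad \P\text{-a.s.}
\]
It remains to identify $\E[X] = \P(\0 \in C_{k,n}^+)$ with the asserted binomial sum. Because the $n$ percolations are mutually independent, the events $A_i := \{\0 \in C^{+,(i)}\}$ are independent and each has probability $\theta^+(p)$ by definition. Hence the number $N := \sum_i \mathds{1}_{A_i}$ of percolations in which the origin belongs to the IEC is $\mathrm{Bin}(n,\theta^+(p))$ distributed, and $\E[X] = \P(N \ge k)$ equals $\theta^+_{k,n}(p) = \sum_{j=k}^n \binom{n}{j}(\theta^+(p))^j(1-\theta^+(p))^{n-j}$.

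I expect the main obstacle to be the bookkeeping in the translation-covariance step: one must check that ``being in the IEC'' is a measurable, well-defined, shift-equivariant function of the configuration, which uses the hypothesis $p > p_c$ via the $\P$-a.s.\ uniqueness of the IEC in each of the $n$ percolations (a $\P$-null set of exceptional configurations can be ignored). Once this is in place, the remainder is a clean combination of the pointwise ergodic theorem with elementary independence across the $n$ coordinates of the product alphabet.
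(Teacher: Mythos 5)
Your proposal is correct and follows essentially the same route as the paper: apply Tempelman's ergodic theorem (Theorem~\ref{thm:tempelman}) on the product space with alphabet $\{0,1\}^n$ to the indicator $X = \mathds{1}_{\{\0 \in C_{k,n}^+\}}$, and identify $\E[X]$ as a $\mathrm{Bin}(n,\theta^+(p))$ tail probability by independence of the $n$ percolations. The only difference is that you spell out the shift-equivariance bookkeeping that the paper leaves implicit, which is a reasonable addition but not a different argument.
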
 
\begin{IEEEproof}
As discussed at the end of the last subsection, the result is a direct consequence of Theorem~\ref{thm:tempelman}. By the fact that the $n$ percolations are mutually independent, we have $\theta^+_{k,n}(p)=\P (Y_n \ge k)$, where $Y_n \sim \text{Bin}(n,\theta^+(p))$.
\end{IEEEproof}
	
From the theorem, we derive a useful fact that plays a key role in our proof of Theorem~\ref{thm:ERkn}. Since the event, say $A_n$, that the origin is in the IOC in all $n$ percolations has positive probability ($\theta(p)^n > 0$ for $p > p_c$), the theorem statement also holds almost surely when conditioned on $A_n$. Hence, by the DCT, we also have
\begin{equation}
\lim_{m \rightarrow \infty} \E\left[\frac{1}{m^2} |C_{k,n}^+ \cap \G_m| \ \bigg| \ A_n \right] = \theta^+_{k,n}(p) \, .
\label{eq:EgivenAn}
\end{equation}

Now, for $T \subseteq [n]$, define $A^+_T$ to be the event that the origin is in the IEC in exactly the percolations indexed by $T$. The following proposition relates the probability of the event $A^+_T$, conditioned on the event O, to $\theta^+(p)$.
\begin{prop} For any $T \subseteq [n]$ with $|T| = t$, we have
$$\P^{\mathrm{o}}(A^+_T)\ =\ (\theta^+(p))^{t}(1-\theta^+(p))^{n-t}$$.
\label{prop:formulap}
\end{prop}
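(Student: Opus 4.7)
The plan is to reduce this claim to the independence fact that powers the proof of Lemma~\ref{lem:theta}: for a single site percolation on $\Z^2$, the event $\{\0 \in C^+\}$ is determined purely by the states of sites other than $\0$, and is therefore independent of whether $\0$ is open. Once that single-percolation fact is in hand, the proposition will follow purely from the product structure of the $n$ independent copies.

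More concretely, for each $i \in [n]$ I would introduce the events $E_i := \{\0 \in C_i^+\}$, where $C_i^+$ denotes the IEC in the $i$-th percolation, together with $O_i := \{\0 \text{ is open in the } i\text{-th percolation}\}$, so that $\mathrm{O} = \bigcap_{i=1}^n O_i$. The observation above makes each $E_i$ a function only of the states of the sites in $\Z^2 \setminus \{\0\}$ in the $i$-th percolation, while $O_i$ depends only on the state of $\0$ in the $i$-th percolation. By the mutual independence of the $n$ percolations, the joint collection $\{E_i\}_{i=1}^n$ is therefore independent of $\{O_i\}_{i=1}^n$, and the $E_i$'s themselves are mutually independent with common marginal $\P(E_i) = \theta^+(p)$. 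Writing
\begin{equation*}
A^+_T \ =\ \bigcap_{i \in T} E_i \;\cap\; \bigcap_{i \notin T} E_i^c,
\end{equation*}
the first independence gives $\P^{\mathrm{o}}(A^+_T) = \P(A^+_T \mid \mathrm{O}) = \P(A^+_T)$, and then the mutual independence of the $E_i$'s factors this as $\prod_{i \in T}\P(E_i)\prod_{i \notin T}\P(E_i^c) = (\theta^+(p))^{t}(1-\theta^+(p))^{n-t}$, which is the claimed expression.

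No serious obstacle remains beyond what Lemma~\ref{lem:theta} already supplies; the only matter to articulate carefully is why $\{E_i\}_i$ decouples not merely from individual $O_i$'s but from the full intersection $\mathrm{O}$. This is immediate from factoring each of the $n$ product measures into a (state of $\0$) piece and a (states on $\Z^2 \setminus \{\0\}$) piece and then using independence across the $n$ copies, so no new machinery beyond the single-percolation lemma is required.
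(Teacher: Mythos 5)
Your proof is correct, but it takes a genuinely different route from the paper's. The paper argues under the conditioning: it observes that, given O, membership in the IEC coincides with membership in the IOC in each percolation, so $A_T^+ \cap \mathrm{O} = A_T \cap \mathrm{O}$; it then computes $\P(A_T \cap \mathrm{O}) = (\theta(p))^{t}(p-\theta(p))^{n-t}$ and divides by $\P(\mathrm{O}) = p^n$, finally invoking the formula $\theta^+(p) = \theta(p)/p$ of Lemma~\ref{lem:theta} to reach the stated expression. You instead lift the key observation from the \emph{proof} of Lemma~\ref{lem:theta} --- that $\{\0 \in C^+\}$ is determined by the states of sites other than the origin --- to all $n$ independent percolations, conclude that $A_T^+$ is independent of $\mathrm{O}$ (so the conditioning is vacuous), and factor $\P(A_T^+)$ directly using the mutual independence of the events $E_i$ with $\P(E_i) = \theta^+(p)$. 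Your decoupling step is sound: the $E_i$ are measurable with respect to the non-origin states and the $O_i$ with respect to the origin states, and the product structure across the $n$ copies gives independence of the two families. What your route buys is a shorter argument that never needs the explicit identity $\theta^+(p)=\theta(p)/p$ (only the independence idea behind it) and yields the stronger byproduct $\P^{\mathrm{o}}(A_T^+) = \P(A_T^+)$; what the paper's route buys is that it reuses Lemma~\ref{lem:theta} as a black box and works with the concretely computable quantities $\theta(p)$, $p-\theta(p)$ and $p^n$, which makes the bookkeeping transparent. Both arguments rely on the standing assumption $p > p_c$ (uniqueness of the IOC/IEC, so that $\P(E_i)=\theta^+(p)$ is well defined), which you use implicitly and should state.
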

\begin{IEEEproof}
By definition, $\P^{\mathrm{o}}(A_T^+) = \P(A_T^+ \mid \mathrm{O})$. Note that, in a given percolation, conditioned on $\0$ being open, the event $\{\0 \text{ is in the IEC}\}$ is the same as the event $\{\0 \text{ is in the IOC}\}$. Consequently, conditioned on O, the event $A_T^+$ is the same as the event, $A_T$, that the origin is in the IOC in exactly the percolations indexed by $T$. Hence, 
$$\P^{\mathrm{o}}(A_T^+) = \P(A_T \mid \mathrm{O}) = \frac{\P(A_T \cap \mathrm{O})}{\P(\mathrm{O})}.$$
The denominator equals $p^n$. The numerator is the event that the origin is in the IOC in exactly the percolations indexed by $T$, and is open but in a finite cluster in the remaining $n-|T|$ percolations. In a given percolation, the probability that the origin is open but in a finite cluster is $p-\theta(p)$. Thus, we have $\P(A_T \cap \mathrm{O})= (\theta(p))^{|T|}(p-\theta(p))^{n-|T|}$. The result now follows from the fact (Lemma~\ref{lem:theta}) that $\theta^+(p) = \frac{\theta(p)}{p}$.
\end{IEEEproof}

\medskip

We are now in a position to prove Theorem~\ref{thm:ERkn}, which we reproduce here for convenience.
\begin{thm}[Theorem~\ref{thm:ERkn}]
For $p>p_c$, we have
	\begin{align*}
	\lim_{m\rightarrow\infty} & \E\left[\frac{1}{m^2} \,|\cR^\infty_{k,n} \cap \G_m|\right] \\
	& \ \ \ \ \ \ = \ \sum_{t=k}^{n} \sum_{j=k}^{t}\binom{n}{t}\binom{t}{j}(\theta^+(p))^{t+j}(1-\theta^+(p))^{n-j} \; .
	\end{align*}
\end{thm}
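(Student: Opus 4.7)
The plan is to condition on which percolations have $\0$ in the IEC, reduce each conditional expectation to the ergodic limit provided by Theorem \ref{thm:prodtheta}, and then sum over the conditioning events using Proposition \ref{prop:formulap}. Since $\{A_T\}_{T \subseteq [n]}$ partitions the sample space under $\P^{\mathrm{o}}$,
\begin{equation*}
\E^{\mathrm{o}}\bigl[|\cR^\infty_{k,n} \cap \G_m|\bigr] \ = \ \sum_{T \subseteq [n]} \P^{\mathrm{o}}(A_T) \, \E^{\mathrm{o}}\bigl[|\cR^\infty_{k,n} \cap \G_m| \bigm| A_T\bigr].
\end{equation*}

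The key structural observation is that on $A_T$, for each $i \in T$ the source's extended cluster $C^+_{\0,i}$ coincides with the unique IEC of the $i$-th percolation, while for $i \in T^c$ it is some almost-surely finite extended cluster $F_i$ containing $\0$. Let $\cR^T$ denote the set of $v \in \Z^2$ lying in the IEC for at least $k$ of the percolations indexed by $T$. Then on $A_T$,
\begin{equation*}
\cR^T \ \subseteq \ \cR^\infty_{k,n} \ \subseteq \ \cR^T \cup \bigcup_{i \in T^c} F_i,
\end{equation*}
since any $v \in \cR^\infty_{k,n} \setminus \cR^T$ must receive at least one packet indexed by $T^c$, placing it in some $F_i$. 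Because each $F_i$ is almost-surely finite, the correction set $\bigcup_{i \in T^c} F_i$ has finite cardinality pathwise and is $o(m^2)$.

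The event $A_T$ has strictly positive probability (equal to $(\theta^+(p))^{|T|}(1-\theta^+(p))^{n-|T|}$ by Proposition \ref{prop:formulap}), so any $\P$-a.s.\ statement also holds $\P^{\mathrm{o}}(\cdot \mid A_T)$-a.s. Applying Theorem \ref{thm:prodtheta} to just the $t := |T|$ percolations in $T$ yields $\tfrac{1}{m^2}|\cR^T \cap \G_m| \to \theta^+_{k,t}(p)$ almost surely, which combined with the sandwich gives $\tfrac{1}{m^2}|\cR^\infty_{k,n} \cap \G_m| \to \theta^+_{k,t}(p)$ almost surely under $\P^{\mathrm{o}}(\cdot \mid A_T)$. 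The uniform bound $\tfrac{1}{m^2}|\cR^\infty_{k,n} \cap \G_m| \le 1$ then permits exchanging limit and conditional expectation by dominated convergence. Grouping subsets $T$ by cardinality $t$ gives
\begin{equation*}
\lim_{m \to \infty} \E^{\mathrm{o}}\bigl[\tfrac{1}{m^2}|\cR^\infty_{k,n} \cap \G_m|\bigr] \ = \ \sum_{t=0}^{n} \binom{n}{t}(\theta^+(p))^t(1-\theta^+(p))^{n-t}\,\theta^+_{k,t}(p),
\end{equation*}
which, after substituting $\theta^+_{k,t}(p) = \sum_{j=k}^{t}\binom{t}{j}(\theta^+(p))^j(1-\theta^+(p))^{t-j}$ and dropping the $t<k$ summands (which vanish), collapses to the double sum in the statement.

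The main obstacle is organizational rather than analytical: keeping the two regimes $i \in T$ and $i \in T^c$ straight so that the sandwich comes out clean, and observing that the conditional ergodic convergence on $A_T$ is free given positivity of $\P^{\mathrm{o}}(A_T)$. No quantitative control of the finite clusters $F_i$ is needed beyond their pathwise finiteness, which is automatic in the supercritical regime once one conditions away from the IEC.
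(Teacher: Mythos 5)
Your proof is correct and takes essentially the same route as the paper: partition on which percolations place the origin in the IEC, apply the ergodic limit of Theorem~\ref{thm:prodtheta} restricted to the percolations indexed by $T$, argue the remaining (finite-cluster) contributions are $o(m^2)$, pass to expectations by dominated convergence, and weight by Proposition~\ref{prop:formulap}. Your sandwich $\cR^T \subseteq \cR^\infty_{k,n} \subseteq \cR^T \cup \bigcup_{i \in T^c} F_i$ is just a streamlined version of the paper's split of $\cR^\infty_{k,n}$ into $\cR^\infty_{k,T}$ and $\cQ$ together with its separate handling of the $|T|<k$ terms.
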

\begin{IEEEproof}
In the framework of $n$ independent site percolations, $\cR_{k,n}^{\infty}$ is the set of sites in $\Z^2$ that are in the extended cluster containing the origin in at least $k$ of the $n$ percolations (conditioned on the origin being open). Thus, in terms of the percolation probability space, the expectation on the left-hand side (LHS) of the theorem statement is in fact the conditional expectation $\E^{\mathrm{o}}$. We then write
\begin{align}
\E^{\mathrm{o}} & \left[|\cR^\infty_{k,n} \cap \G_m|\right] \notag \\
 \ \ \ \ \ \ \ \ \ & =  \ \sum_{t = 0}^n \sum_{T \subseteq [n]: \atop |T|=t}  \E^{\mathrm{o}}  \left[|\cR^\infty_{k,n} \cap \G_m| \ \big| \ A_T^+\right] \, \P^{\mathrm{o}}(A_T^+) \,.
 \label{eq:Eo_sum}
\end{align}

Consider any summand with $|T| = t < k$. Given $A_T^+$, the origin is in the IEC in no more than $k-1$ of the percolations; hence, each site in $\cR_{k,n}^{\infty}$ must belong to the finite cluster, denoted by $C_{\0}[j]$, in the $j$th percolation for some $j \notin T$. As a result, given $A_T^+$, $\cR_{k,n}^\infty$ is contained in the union $\cup_{j \notin T} C_{\0}[j]$, which is finite $\P^{\mathrm{o}}$-a.s, so that ${\displaystyle \lim_{m \to \infty}} \frac{1}{m^2} |\cR^\infty_{k,n} \cap \G_m| = 0$ \ $\P^{\mathrm{o}}$-a.s.. Consequently, by the DCT, we have for any $T \subseteq[n]$ with $|T| < k$,
\begin{equation}
{\displaystyle \lim_{m \to \infty}} \E^{\mathrm{o}}  \left[\frac{1}{m^2}|\cR^\infty_{k,n} \cap \G_m| \ \big| \ A_T^+\right] = 0.
\label{eq:t<k}
\end{equation}

Next, consider any summand in \eqref{eq:Eo_sum} with $|T| = t \ge k$. The sites in $\cR_{k,n}^{\infty}$ can be exactly one of two types: those that belong to the extended cluster $C_{\0}^+$ in at least $k$ of the percolations indexed by $T$; and those that do not. Let $\cR_{k,T}^\infty$ be the subset of $\cR_{k,n}^{\infty}$ consisting of sites of the first type, and let $\mathcal{Q} = \cR_{k,n}^{\infty} \setminus \cR_{k,T}^{\infty}$. Thus,  
\begin{align}
 \E^{\mathrm{o}} & \left[|\cR^\infty_{k,n} \cap \G_m| \ \big| \ A_T^+\right]  \notag \\ 
  & \!\!\!\!\! = \ \E^{\mathrm{o}}\left[|\cR^\infty_{k,T} \cap \G_m| \ \big| \ A_T^+\right] + \E^{\mathrm{o}}\left[|\cQ \cap \G_m| \ \big| \ A_T^+\right].
 \label{eq:EoRkninfty}
\end{align}

Note that any site in $\cQ$ must belong to $C_{\0}^+$ in at least one percolation outside of $T$. In particular, given $A_T^+$, $\cQ$ is $\P^{\mathrm{o}}$-a.s.\ finite. Thus, arguing as in the $|T| < k$ case, we have 
\begin{equation}
\lim_{m \to \infty} \E^{\mathrm{o}}  \left[\frac{1}{m^2}| \cQ \cap \G_m| \ \big| \ A_T^+\right] = 0.
\label{eq:Q}
\end{equation}

Finally, note that 
\begin{align*}
\E^{\mathrm{o}}\left[|\cR^\infty_{k,T} \cap \G_m| \ \big| \ A_T^+\right] 
  & = \E\left[|\cR^\infty_{k,T} \cap \G_m| \ \big| \ A_T^+ \cap \mathrm{O}\right] \\
  & \stackrel{(a)}{=} \E\left[|C_{k,T}^+ \cap \G_m| \ \big| \ A_T^+ \cap \mathrm{O} \right] \\
  & \stackrel{(b)}{=} \E\left[|C_{k,T}^+ \cap \G_m| \ \big| \ A_T\right],
\end{align*}
where $A_T$ is the event that $\0$ is in the IOC in exactly the percolations indexed by $T$, and $C_{k,T}^+$ is the set of sites of $\Z^2$ that belong to the IEC in at least $k$ of the percolations indexed by $T$. The equality labeled~(a) above is due to the fact that, conditioned on $A_T^+ \cap O$, $\cR_{k,T}^\infty = C_{k,T}^+$. The equality labeled~(b) is because $A_T^+ \cap \mathrm{O} = A_T \cap \mathrm{O}$, and moreover, the event that $\0$ is open in the percolations outside $T$ is independent of the percolations indexed by $T$. 

Thus, restricting our attention to \emph{only} the percolations indexed by $T$, we can apply \eqref{eq:EgivenAn} with $n = t$ to obtain ${\displaystyle \lim_{m \to \infty}} \E\left[\frac{1}{m^2}|C_{k,T}^+ \cap \G_m| \ \big| \ A_T\right] = \theta_{k,t}^+(p)$. Hence,
\begin{equation}
 \lim_{m \to \infty} \E^{\mathrm{o}}  \left[\frac{1}{m^2}| \cR^\infty_{k,T}  \cap \G_m| \ \big| \ A_T^+\right] =  \theta_{k,t}^+(p).
\label{eq:Eo:last}
\end{equation}

Upon multiplying \eqref{eq:Eo_sum} by $\frac{1}{m^2}$, and letting $m \to \infty$, we obtain via \eqref{eq:t<k}--\eqref{eq:Eo:last}:
\begin{equation*}
\lim_{m \to \infty} \E^{\mathrm{o}} \left[\frac{1}{m^2}|\cR^\infty_{k,n} \cap \G_m|\right] = \sum_{t=k}^n \sum_{T \subseteq [n]: \atop |T| = t} \theta_{k,t}^+(p) \, \P^{\mathrm{o}}(A_T^+).
\end{equation*}

Applying Proposition~\ref{prop:formulap} completes the proof.
\end{IEEEproof}
	
\bigskip
	
Finally, we prove Proposition~\ref{prop:C0}, which we re-state here for ease of reference.

 \begin{prop}[Proposition~\ref{prop:C0}]
For site percolation with $p > p_c$, we have
$$
 \lim_{m \to \infty} \frac{1}{m^2}\E\bigl[|C_{\0} \cap \G_m| \ \big| \ \0 \text{ is open}\bigr] \ = \ \frac{{\theta(p)}^2}{p}.
 $$
 \end{prop}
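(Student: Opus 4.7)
The plan is to decompose the event $\{\0 \text{ is open}\}$ according to whether $\0$ lies in the (unique, $\P$-a.s.) infinite open cluster $C$ or in a finite open cluster, and then pass to the limit pointwise before taking expectations via the DCT.

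First I would observe that on the event $\{\0 \text{ is open and } \0 \notin C\}$, the open cluster $C_{\0}$ is $\P$-a.s.\ finite, so $|C_{\0} \cap \G_m| \le |C_{\0}| < \infty$ and consequently $\frac{1}{m^2} |C_{\0} \cap \G_m| \to 0$ almost surely. On the event $\{\0 \in C\}$ (which is contained in $\{\0 \text{ is open}\}$), the open cluster containing $\0$ coincides with the unique IOC, i.e., $C_{\0} = C$, so Theorem~\ref{thm:theta} gives
\[
\frac{1}{m^2}|C_{\0} \cap \G_m| \;=\; \frac{1}{m^2}|C \cap \G_m| \;\xrightarrow{m\to\infty}\; \theta(p) \quad \P\text{-a.s.}
\]
Combining the two cases, we have, $\P$-a.s.\ on the event $\{\0 \text{ is open}\}$,
\[
\lim_{m\to\infty} \frac{1}{m^2}|C_{\0} \cap \G_m| \;=\; \theta(p)\, \mathds{1}_{\{\0 \in C\}}.
\]

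Next I would invoke the DCT, justified by the trivial uniform bound $\frac{1}{m^2}|C_{\0} \cap \G_m| \le 1$, to move the limit inside the conditional expectation:
\[
\lim_{m\to\infty} \E\!\left[\frac{1}{m^2}|C_{\0} \cap \G_m| \,\Big|\, \0 \text{ is open}\right] \;=\; \theta(p)\,\P\!\left(\0 \in C \,\big|\, \0 \text{ is open}\right).
\]
Finally, since $\{\0 \in C\} \subseteq \{\0 \text{ is open}\}$ (a closed site is in no cluster), we have $\P(\0 \in C \mid \0 \text{ is open}) = \theta(p)/p$; equivalently, this ratio equals $\theta^+(p)$ by Lemma~\ref{lem:theta}. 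Substituting gives $\theta(p) \cdot \theta(p)/p = \theta(p)^2/p$, as required.

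There is no real obstacle here; the argument is a routine application of the ergodic statement of Theorem~\ref{thm:theta} together with DCT. The only subtlety worth stating carefully is the identification $C_{\0} = C$ on $\{\0 \in C\}$ (which relies on the $\P$-a.s.\ uniqueness of the IOC for $p > p_c$), and the observation that conditioning on $\{\0 \text{ open}\}$ does not destroy the a.s.\ convergence on its sub-event $\{\0 \in C\}$ because $\P(\0 \text{ open}) = p > 0$.
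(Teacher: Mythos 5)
Your proof is correct and takes essentially the same route as the paper's: both decompose according to whether $\0$ lies in the (a.s.\ unique) IOC, use Theorem~\ref{thm:theta} on $\{\0 \in C\}$ and the a.s.\ finiteness of $C_{\0}$ off that event, and finish with the DCT together with $\P(\0 \in C \mid \0 \text{ is open}) = \theta(p)/p$. The only cosmetic difference is that you identify the single pointwise limit $\theta(p)\,\mathds{1}_{\{\0 \in C\}}$ and apply the DCT once under the conditional measure, whereas the paper splits the conditional expectation over $A = \{\0 \in C\}$ and $A^c$ and applies the DCT to each piece.
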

 
\begin{IEEEproof}
 We use $\P^{\0}$ and $\E^{\0}$, respectively, to denote the probability measure and expectation operator conditioned on the event that the origin $\0$ is open. Let $C$ be the (unique) IOC, and $A$ the event $\{\0 \in C\}$.  Then,
 \begin{align*}
 \lim_{m \to \infty} \E^{\0} & \left[\frac{1}{m^2} |C_{\0} \cap \G_m| \right] \\
 & = \lim_{m \to \infty} \E \left[\frac{1}{m^2} |C_{\0} \cap \G_m| \ \big| \ A \right] \P^{\0}(A) \\
  & \ \ \ \ \ \ \ \ \ \ +  \lim_{m \to \infty} \E^{\0} \left[\frac{1}{m^2} |C_{\0} \cap \G_m| \ \big| \ A^c \right] \P^{\0}(A^c) 
\end{align*}

Now, given $A^c$ (i.e., $\0 \notin C$), $C_{\0}$ is $\P^{\0}$-a.s.\ finite, and so by the usual DCT argument, 
${\displaystyle \lim_{m \to \infty}} \E^{\0} \left[\frac{1}{m^2} |C_{\0} \cap \G_m| \ \big| \ A^c \right] = 0$. On the other hand, given $A$, we have $C_{\0} = C$. From Theorem~\ref{thm:theta}, we know that ${\displaystyle \lim_{m \to \infty}} \frac{1}{m^2} |C \cap \G_m| = \theta(p)$ $\P$-a.s.. Moreover, this statement holds even when the probability measure $\P$ is conditioned on $A$, since $\P(A) = \theta(p) > 0$ for $p > p_c$. So, again by the DCT, ${\displaystyle \lim_{m \to \infty}} \E[\frac{1}{m^2} |C \cap \G_m| \mid A] = \theta(p)$. We have thus shown that
$$
\lim_{m \to \infty} \E^{\0} \left[\frac{1}{m^2} |C_{\0} \cap \G_m| \right] = \theta(p) \, \P^{\0}(A).
$$

The proof is completed by observing that $\P^{\0}(A) = \frac{\P(A)}{\P({\0} \text{ is open})} = \frac{\theta(p)}{p}$.
\end{IEEEproof}

\end{document}